\newcommand{\inner}[2]{\langle #1 , #2\rangle}
\newcommand{\Inner}[2]{\left\langle #1 , #2\right\rangle}
\DeclareMathOperator{\fid}{\mathrm{F}}
\DeclareMathOperator{\rfid}{\mathrm{F}_{\mathrm{r}}}
\newcommand{\kprod}[3]{#1_{#2\dots#3}}
\newtheorem{theorem}{Theorem}
\newtheorem{lemma}[theorem]{Lemma}
\newtheorem{proposition}[theorem]{Proposition}
\newtheorem{remark}[theorem]{Remark}
\newtheorem{definition}[theorem]{Definition}
\newcommand{\pa}[1]{(#1)}
\newcommand{\Pa}[1]{\left(#1\right)}
\newcommand{\Br}[1]{\left[#1\right]}
\newcommand{\Set}[1]{\left\{#1\right\}}
\newcommand{\bra}[1]{\langle#1|}
\newcommand{\ket}[1]{|#1\rangle}
\newcommand{\kb}[1]{\ket{#1} \bra{#1}}
\newcommand{\braket}[2]{  \langle #1 \vert #2 \rangle  }
\newcommand{\vecbra}[1]{\langle\hspace{-2.5pt}\langle#1|}
\newcommand{\vecket}[1]{|#1\rangle\hspace{-2.5pt}\rangle}
\newcommand{\veckb}[1]{\vecket{#1} \vecbra{#1}}
\newcommand{\vecbraket}[2]{  \langle\hspace{-2.5pt}\langle #1 \vert #2 \rangle\hspace{-2.5pt}\rangle  }
\def\Jamiolkowski{J}
\newcommand{\jam}[1]{\Jamiolkowski\pa{#1}}
\DeclareMathOperator{\trace}{Tr}
\newcommand{\ptr}[2]{\trace_{#1}\pa{#2}}
\newcommand{\Ptr}[2]{\trace_{#1}\Pa{#2}}
\newcommand{\tr}[1]{\ptr{}{#1}}
\newcommand{\tinyspace}{\mspace{1mu}}
\newcommand{\abs}[1]{|\tinyspace#1\tinyspace|}
\newcommand{\Abs}[1]{\left|\tinyspace#1\tinyspace\right|}
\newcommand{\norm}[1]{\lVert\tinyspace#1\tinyspace\rVert}
\newcommand{\Norm}[1]{\left\lVert\tinyspace#1\tinyspace\right\rVert}
\newcommand{\tnorm}[1]{\norm{#1}_{\trace}}
\newcommand{\Tnorm}[1]{\Norm{#1}_{\trace}}
\newcommand{\dnorm}[1]{\norm{#1}_{\diamond}}
\newcommand{\snorm}[2]{\norm{#1}_{\diamond{#2}}}
\newcommand{\Snorm}[2]{\Norm{#1}_{\diamond{#2}}}
\newcommand{\fontmapset}{\mathbf}
\newcommand{\mset}[2]{\fontmapset{#1}\pa{#2}}
\newcommand{\lin}[1]{\mset{L}{#1}}
\newcommand{\uni}[1]{\mset{U}{#1}}
\newcommand{\her}[1]{\mset{Her}{#1}}
\newcommand{\pos}[1]{\mset{Pos}{#1}}
\newcommand{\den}[1]{\mset{Dens}{#1}}
\newcommand{\uball}[1]{\mset{K}{#1}}
\newcommand{\PBBC}{\mathrm{B_{BC}}}
\newcommand{\PABC}{\mathrm{A_{BC}}}
\newcommand{\PBOT}{\mathrm{B_{OT}}}
\newcommand{\PAOT}{\mathrm{A_{OT}}}
\def\ot{\otimes}
\def\cM{\mathcal{M}}
\def\cS{\mathcal{S}}
\def\cW{\mathcal{W}}
\def\cX{\mathcal{X}}
\def\cY{\mathcal{Y}}
\def\cZ{\mathcal{Z}}
\newcommand{\half}{\frac{1}{2}}
\newcommand{\finalMix}[2]{\rho_{#1}({#2})}
\newcommand{\finalMixS}{\finalMix{S}{\tilde B}}
\newcommand{\finalMixT}{\finalMix{T}{\tilde B}} 
\newcommand{\comp}{\circ}
\begin{document}

\title{
Fidelity of quantum strategies with applications to cryptography
}

\author{Gus Gutoski}
\affiliation{Perimeter Institute for Theoretical Physics, ON, Canada}
\author{Ansis Rosmanis}
\affiliation{Centre for Quantum Technologies, National University of Singapore, Singapore}
\affiliation{School of Physical and Mathematical Sciences, Nanyang Technological University, Singapore}
\author{Jamie Sikora}
\affiliation{Centre for Quantum Technologies, National University of Singapore, Singapore}
\affiliation{MajuLab, \textup{CNRS-UNS-NUS-NTU} International Joint Research Unit, UMI 3654, Singapore}
 
\date{August 28, 2018}
 
\maketitle

\begin{abstract}
We introduce a definition of the fidelity function for multi-round quantum strategies, which we call the \emph{strategy fidelity}, that is a generalization of the fidelity function for quantum states. 
We provide many properties of the strategy fidelity including a Fuchs-van de Graaf relationship with the strategy norm.  
We also provide a general monotonicity result for both the strategy fidelity and strategy norm under the actions of strategy-to-strategy linear maps.  
We illustrate an operational interpretation of the strategy fidelity in the spirit of Uhlmann's Theorem and discuss 
its application to the security analysis of quantum protocols for interactive cryptographic tasks 
such as bit-commitment and oblivious string transfer. 
Our analysis is general in the sense that the actions of the protocol need not be fully specified, which is in stark contrast to most other security proofs. 
Lastly, we provide a semidefinite programming formulation of the strategy fidelity. 
\end{abstract} 

\section{Introduction}
\label{sec:review} 

\subsection{Review of quantum strategies}
       
In this paper we consider multiple-round interactions between two parties involving the exchange of quantum information.
There is a natural asymmetry between the parties as only one of the parties can send the first message or receive the final message.
Since we are not concerned about optimizing the number of messages exchanged, without loss of generality both of these tasks are done by the same party, which, for convenience, we call \emph{Bob}.
Let us call the other party \emph{Alice}.
The interaction between Alice and Bob decomposes naturally into a finite number $r$ of \emph{rounds} (see Figure\ \ref{fig:interaction}). 
 
Such interactions are conveniently described by the formalism of quantum strategies
introduced in Ref.~\cite{GutoskiW07}.
We closely follow that formalism here with the exception that we consider two mathematically different objects: \emph{strategies} and \emph{pure strategies}.
Pure strategies are implemented using linear isometries and preserve their final memory space,
while strategies trace out the final memory space. 
The object we call a strategy is called a \emph{non-measuring strategy} in Ref.~\cite{GutoskiW07}. 
 For additional details on quantum strategies,
one may refer to~\cite{GutoskiW07,ChiribellaD+09a,Gutoski-Phd}.

\begin{definition}[Pure strategy and pure co-strategy] \label{def:purestrat}
Let $r\geq 1$ and let $\cX_1,\ldots,\cX_r,\cY_1,\ldots,\cY_r,\cZ_r$, $\cW_r$ be complex Euclidean spaces
and, for notational convenience, let $\cX_{r+1} := \mathbb{C}$ and $\cZ_0 := \mathbb{C}$.
An \emph{$r$-round pure strategy} $\tilde A$ having \emph{input spaces} $\cX_1,\ldots,\cX_r$,  \emph{output spaces} $\cY_1,\ldots,\cY_r$, and \emph{final memory space} $\cZ_r$, consists of:
\begin{enumerate}
\item[1.]
  complex Euclidean spaces $\cZ_1,\ldots,\cZ_{r-1}$, called \emph{intermediate memory spaces}, and
\item[2.]
  an $r$-tuple of linear isometries $(A_1,\ldots,A_r)$ of the form
  $A_i : \cX_i\ot\cZ_{i-1}\to \cY_i\ot\cZ_i$.
\end{enumerate} 

An \emph{$r$-round pure co-strategy} having input spaces $\cY_1,\ldots,\cY_r$, output spaces $\cX_1,\ldots,\cX_r$, and final memory space $\cW_r$,  consists of:
\begin{enumerate}
\item[1.]
  complex Euclidean intermediate memory spaces $\cW_0,\ldots,\cW_{r-1}$,
\item[2.]
  a pure quantum state $\ket{\beta} \in \cX_1\otimes\cW_0$, called the \emph{initial state}, and
\item[3.]
  an $r$-tuple of linear isometries $(B_1,\ldots,B_r)$ of the form
  $B_i : \cY_i\ot\cW_{i-1}\to \cX_{i+1}\ot\cW_i$. 
\end{enumerate}
A pure strategy and a pure co-strategy are said to be \emph{compatible} when
the input spaces of one are the output spaces of the other, and vice versa.
The \emph{final state} after the interaction between $\tilde A$ and $\tilde B$ is denoted by 
\[
\ket{\psi(\tilde A,\tilde B)} := (I_{\cZ_r}\ot B_r)(A_r\ot I_{\cW_{r-1}})\cdots (I_{\cZ_{1}}\ot B_{1})(A_{1}\ot I_{\cW_{0}}) \ket{\beta} \in\cZ_r\ot\cW_r. 
\] 
\end{definition}  
In order to extract classical information from the interaction it suffices to permit Alice and Bob to measure their respective parts of the final state $\ket{\psi(\tilde A,\tilde B)}$.

\newcommand{\xSep}{1.25cm} 
\newcommand{\ySep}{1.3cm} 
\newcommand{\xFirst}{2.8cm} 
\newcommand{\xLast}{13.6cm} 
\newcommand{\boxSize}{1.0cm} 
\newcommand{\arrowGap}{0.1cm} 
\newcommand{\arrowLabelYGap}{0.22cm} 
\newcommand{\diagLabelXGap}{0.13cm} 
\newcommand{\diagLabelYGap}{0.35cm} 

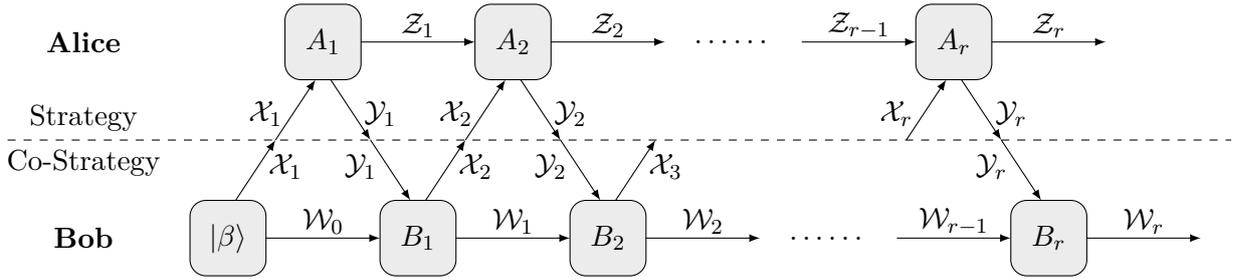
\begin{figure}[tbh]
\centering
\begin{tikzpicture}
\draw [dashed] (-0.1,0) -- + (16.1,0);
\node at (0.9, 0.3) {Strategy};
\node at (0.9,-0.3) {Co-Strategy};
\node at (0.9, \ySep) {\bf{Alice}};
\node at (0.9,-\ySep) {\bf{Bob}};
\tikzstyle{isom} = [	draw,
						fill=lightgray!30!white,
						minimum height=\boxSize,
						minimum width=\boxSize,
						align=center,
						rounded corners = 6pt]
\node [isom] at (\xFirst+1*\xSep, \ySep) {$A_1$};
\node [isom] at (\xFirst+3*\xSep, \ySep) {$A_2$};
\node [isom] at (\xLast-1*\xSep, \ySep) {$A_r$};

\node [isom] at (\xFirst,             -\ySep) {$\ket{\beta}$};
\node [isom] at (\xFirst+2*\xSep,-\ySep) {$B_1$};
\node [isom] at (\xFirst+4*\xSep,-\ySep) {$B_2$};
\node [isom] at (\xLast,             -\ySep) {$B_r$};

\tikzset{>=latex} 
\draw [->] (\xFirst+0.5*\xSep,0) -- +( 0.5*\xSep-\arrowGap, \ySep-\boxSize/2);
\draw [<-] (\xFirst+0.5*\xSep,0) -- +(-0.5*\xSep+\arrowGap,-\ySep+\boxSize/2);
\draw [->] (\xFirst+2.5*\xSep,0) -- +( 0.5*\xSep-\arrowGap, \ySep-\boxSize/2);
\draw [<-] (\xFirst+2.5*\xSep,0) -- +(-0.5*\xSep+\arrowGap,-\ySep+\boxSize/2);
\draw [<-] (\xFirst+4.5*\xSep,0) -- +(-0.5*\xSep+\arrowGap,-\ySep+\boxSize/2);
\draw [->] (\xLast-1.5*\xSep,0) -- +( 0.5*\xSep-\arrowGap, \ySep-\boxSize/2);
\draw [->] (\xFirst+1.5*\xSep,0) -- +( 0.5*\xSep-\arrowGap,-\ySep+\boxSize/2);
\draw [<-] (\xFirst+1.5*\xSep,0) -- +(-0.5*\xSep+\arrowGap, \ySep-\boxSize/2);
\draw [->] (\xFirst+3.5*\xSep,0) -- +( 0.5*\xSep-\arrowGap,-\ySep+\boxSize/2);
\draw [<-] (\xFirst+3.5*\xSep,0) -- +(-0.5*\xSep+\arrowGap, \ySep-\boxSize/2);
\draw [->] (\xLast-0.5*\xSep,0) -- +( 0.5*\xSep-\arrowGap,-\ySep+\boxSize/2);
\draw [<-] (\xLast-0.5*\xSep,0) -- +(-0.5*\xSep+\arrowGap, \ySep-\boxSize/2);
\draw [->] (\xFirst             +\boxSize/2,-\ySep) -- +(2*\xSep-\boxSize,0);
\draw [->] (\xFirst+2*\xSep+\boxSize/2,-\ySep) -- +(2*\xSep-\boxSize,0);
\draw [->] (\xFirst+4*\xSep+\boxSize/2,-\ySep) -- +(2*\xSep-\boxSize,0);
\draw [->] (\xLast-2*\xSep+\boxSize/2,-\ySep) -- +(2*\xSep-\boxSize,0);
\draw [->] (\xLast            +\boxSize/2,-\ySep) -- +(2*\xSep-\boxSize,0);
\draw [->] (\xFirst+1*\xSep+\boxSize/2,\ySep) -- +(2*\xSep-\boxSize,0);
\draw [->] (\xFirst+3*\xSep+\boxSize/2,\ySep) -- +(2*\xSep-\boxSize,0);
\draw [->] (\xLast-3*\xSep+\boxSize/2,\ySep) -- +(2*\xSep-\boxSize,0);
\draw [->] (\xLast-1*\xSep+\boxSize/2,\ySep) -- +(2*\xSep-\boxSize,0);

\node at (\xFirst+1*\xSep,-\ySep+\arrowLabelYGap) {$\cW_0$};
\node at (\xFirst+3*\xSep,-\ySep+\arrowLabelYGap) {$\cW_1$};
\node at (\xFirst+5*\xSep,-\ySep+\arrowLabelYGap) {$\cW_2$};
\node at (\xLast-1*\xSep,-\ySep+\arrowLabelYGap) {$\cW_{r-1}$};
\node at (\xLast+1*\xSep,-\ySep+\arrowLabelYGap) {$\cW_r$};

\node at (\xFirst+2*\xSep,\ySep+\arrowLabelYGap) {$\cZ_1$};
\node at (\xFirst+4*\xSep,\ySep+\arrowLabelYGap) {$\cZ_2$};
\node at (\xLast-2*\xSep,\ySep+\arrowLabelYGap) {$\cZ_{r-1}$};
\node at (\xLast            ,\ySep+\arrowLabelYGap) {$\cZ_r$};

\node at (\xFirst+0.5*\xSep-\diagLabelXGap, \diagLabelYGap) {$\cX_1$};
\node at (\xFirst+0.5*\xSep+\diagLabelXGap,-\diagLabelYGap) {$\cX_1$};
\node at (\xFirst+2.5*\xSep-\diagLabelXGap, \diagLabelYGap) {$\cX_2$};
\node at (\xFirst+2.5*\xSep+\diagLabelXGap,-\diagLabelYGap) {$\cX_2$};
\node at (\xFirst+4.5*\xSep+\diagLabelXGap,-\diagLabelYGap) {$\cX_3$};
\node at (\xLast-1.5*\xSep-\diagLabelXGap,\diagLabelYGap) {$\cX_r$};

\node at (\xFirst+1.5*\xSep-\diagLabelXGap,-\diagLabelYGap) {$\cY_1$};
\node at (\xFirst+1.5*\xSep+\diagLabelXGap, \diagLabelYGap) {$\cY_1$};
\node at (\xFirst+3.5*\xSep-\diagLabelXGap,-\diagLabelYGap) {$\cY_2$};
\node at (\xFirst+3.5*\xSep+\diagLabelXGap, \diagLabelYGap) {$\cY_2$};
\node at (\xLast-0.5*\xSep-\diagLabelXGap,-\diagLabelYGap) {$\cY_r$};
\node at (\xLast-0.5*\xSep+\diagLabelXGap, \diagLabelYGap) {$\cY_r$};

\node at (\xFirst/2+2*\xSep+\xLast/2,-\ySep) {$\dots\dots$};
\node at (\xFirst/2+1*\xSep+\xLast/2,\ySep) {$\dots\dots$};
\end{tikzpicture}
\caption{An $r$-round interaction between a pure strategy of Alice (the linear isometries above the dashed line) and a pure co-strategy of Bob (the linear isometries below the dashed line).
  Arrows crossing the dashed line represent messages exchanged between the parties,
  while horizontal arrows represent private memory. 
}
\label{fig:interaction}
\end{figure}
  
A pure strategy $\tilde A$ specified by linear isometries $(A_1,\ldots,A_r)$ can be represented by a single isometry
\begin{equation}
\tilde A :=
(A_r\ot I_{\kprod{\cY}{1}{r-1}})
\dots
(I_{\kprod{\cX}{3}{r}}\ot A_2\ot I_{\cY_1})
(I_{\kprod{\cX}{2}{r}}\ot A_1)
: \kprod{\cX}{1}{r} \to \kprod{\cY}{1}{r}\ot\cZ_r,
\end{equation}
{where $\kprod{\cX}{i}{j}$ is short for $\cX_i\ot\cdots\ot\cX_j$ and $\kprod{\cY}{i}{j}$ is short for $\cY_i\ot\cdots\ot\cY_j$.}
We abuse the notation\footnote{It will be clear from context to which we are referring.} $\tilde A$ here and elsewhere in the paper by using it to denote both
a pure strategy and the linear isometry representing it, and we do the same for pure co-strategies $\tilde B$, discussed next. 
A pure co-strategy $\tilde B$ specified by the initial state $\ket{\beta}$ and linear isometries $(B_1,\ldots,B_r)$  can be represented by a single isometry
\begin{equation}
\tilde B :=
(B_r\ot I_{\kprod{\cX}{1}{r}})
\cdots
(I_{\kprod{\cY}{2}{r}}\ot B_1\ot I_{\cX_1})
(I_{\kprod{\cY}{1}{r}}\ot \ket{\beta})
: \kprod{\cY}{1}{r}\to\kprod{\cX}{1}{r}\ot\cW_r.
\end{equation}
Note that two pure strategies that are represented by the same linear isometry are effectively
indistinguishable, and the same holds true for pure co-strategies. 
       
After the interaction, Alice's actions do not affect Bob's
reduced state (and vice versa).
Hence, from Bob's point of view, Alice can trace out her final memory space. 
In view of this, a \emph{strategy} $A$ is obtained from a pure
strategy $\tilde A$ by tracing out the final memory space $\cZ_r$ and
a \emph{co-strategy} $B$ is obtained from a pure co-strategy $\tilde
B$ by tracing out the final memory space $\cW_r$.%
\footnote{We note that, technically, strategies and pure
strategies are (slightly) different mathematical objects.}
Multiple pure strategies (pure co-strategies) can yield the same strategy
(co-strategy), and we call any such pure strategy (pure co-strategy) a
\emph{purification}.
We will use tildes to indicate purifications.
 
Just as a pure strategy and a pure co-strategy can be specified by linear isometries $\tilde A$ and $\tilde B$, respectively, their corresponding strategy $A$ and co-strategy $B$ can be specified by quantum channels
\begin{alignat}{3}
 &\Phi_A \,&&: \lin{\kprod{\cX}{1}{r}}\to\lin{\kprod{\cY}{1}{r}} \,&&: X\mapsto \ptr{\cZ_r}{\tilde AX\tilde A^*}, \\
 &\Psi_B \,&&: \lin{\kprod{\cY}{1}{r}}\to\lin{\kprod{\cX}{1}{r}} \,&&: Y\mapsto \ptr{\cW_r}{\tilde BY\tilde B^*},
\end{alignat}
where $\lin{\cX}$ is the set of all linear operators acting on a space $\cX$.
In turn, both of these channels can be specified using their Choi-Jamio{\l}kowski representations, but, due to the asymmetry between strategies and co-strategies, it is convenient to specify the latter one using the Choi-Jamio{\l}kowski representation of its adjoint map.
Thus, we can represent a strategy $A$ by $\jam{\Phi_A}$ and a co-strategy $B$ by $\jam{\Psi_B^*}$, both of which are positive semidefinite operators acting on $\kprod{\cY}{1}{r}\ot\kprod{\cX}{1}{r}$.
In a similar abuse of notation as mentioned before, we refer to $\jam{\Phi_A}$ as the strategy $A$ and to $\jam{\Psi_B^*}$ as the co-strategy $B$. 
 
For compatible pure strategy $\tilde A$ and pure co-strategy $\tilde B$, let
\begin{equation} \label{eq:finalMixed}
\finalMix{A}{\tilde B} := \Ptr{\cZ_r}{\kb{\psi(\tilde A,\tilde B)}} 
\end{equation}
denote the reduced state of the final memory space $\cW_r$ of $\tilde B$ after
the interaction between $\tilde A$ and $\tilde B$. 
Since this state is the same for all purifications of $A$, we omit the tilde above $A$ in this notation.
 
\subsection{The definition of strategy fidelity}

Recall that the fidelity $\fid(P,Q)$ between two positive semidefinite operators $P$ and $Q$ is defined as 
\begin{equation}
  \fid(P,Q) := \Tnorm{\sqrt{P}\sqrt{Q}}.
\end{equation} 
When applied to density operators $\rho,\xi$, the fidelity function $\fid(\rho,\xi)$ is a useful distance measure for quantum states.
We would like to construct a generalization of the fidelity function that can serve as a useful distance measure for quantum strategies. 
  
Just as the trace norm $\tnorm{\rho-\xi}$ quantifies the distinguishability of quantum states, the strategy norm\footnote{What we refer to as the strategy norm was introduced in \cite{ChiribellaD+08b} where it is called the \emph{operational norm}. We use the term \emph{strategy norm} to make the connections to strategy fidelity more apparent.} $\snorm{S-T}{\mathrm{r}}$ studied in~\cite{ChiribellaD+08b}~and~\cite{Gutoski12}, quantifies the distinguishability of quantum strategies $S$ and 
$T$ having the same input and output spaces. 
In other words, ${\snorm{S-T}{\mathrm{r}}}$ is proportional to the maximum bias with which an interacting pure co-strategy $\tilde B$ can distinguish $S$ from $T$. Another expression for this maximum bias can be derived as follows.
Let $\cW_r$ be the final memory space of $\tilde B$ and let $\finalMixS,\finalMixT$ be
the reduced states of this final memory space after an interaction between $\tilde B$ and $S,T$, respectively, as defined in~\eqref{eq:finalMixed}. 
It is clear that the maximum bias with which $S$ can be distinguished from $T$ is proportional to the maximum over all such $\tilde B$ with which the final state $\finalMixS$ can be distinguished from $\finalMixT$, which is precisely $\tnorm{\finalMixS-\finalMixT}$.

\begin{remark}
\label{rem:UniInvariance} 
All purifications $\tilde B$ of $B$ are equivalent up to a unitary acting on $\cW_r$.
Thus, unitarily invariant distance measures between $\finalMixS$ and $\finalMixT$ (including the trace distance and the fidelity) depend only upon $B$ and not upon the specific purification $\tilde B$. 
\end{remark}
The strategy norm is defined (see Definition~\ref{def:str_norm}) so that
\begin{equation} \label{snorm}
  \snorm{S-T}{\mathrm{r}} = {\max_B} \, \tnorm{\finalMixS-\finalMixT}.
\end{equation}
In light of this observation, we define the strategy fidelity by replacing the maximization of the trace distance  between $\finalMixS$ and $\finalMixT$ with the minimization of the fidelity between $\finalMixS$ and $\finalMixT$. 

\begin{definition}[Strategy fidelity] \label{def:strat-fid}
  For any $r$-round strategies $S$ and $T$ having the same input and output spaces, the \emph{strategy fidelity} is defined as  
  \begin{equation} \label{eq:naive-def}
    \rfid(S,T) := \min_B \, \fid(\finalMixS,\finalMixT) 
  \end{equation}
where the minimization is over all compatible co-strategies $B$ and the states $\finalMixS,\finalMixT$ are as defined in~\eqref{eq:finalMixed}.
\end{definition} 

In the following discussion, we argue that this definition is a meaningful one by proving analogues of the Fuchs-van de Graaf inequalities and Uhlmann's Theorem for the strategy fidelity, among many other properties. 
 
\begin{remark} 
The same definition of fidelity has been considered for the case of channels~\textup{\cite{10.1063/1.1904510}}. In that setting, they establish several properties which we generalize to the strategy setting.  
\end{remark} 

First, let us observe that the fidelity for quantum states is recovered as a special case of the strategy fidelity when $S,T$ are one-round strategies with no input (that is, $\cX_1=\mathbb{C}$) and only one output message.
To see this, observe that one-round strategies such as $S,T$ are simply states $\rho,\xi$ acting on $\cY_1$.
Bob's most general pure co-strategy is an isometry $\tilde B:\cY_1\to\cW_1$.  
In this case the effect of Bob's purified strategy $\tilde B$ is cancelled in the computation of $\rfid(S,T)$ so that
\begin{equation}
  \fid_1(S,T) = \min_B \, \fid(\finalMixS,\finalMixT) = \fid(\tilde B\rho \tilde B^*, \tilde B\xi \tilde B^*) = \fid(\rho,\xi)
\end{equation} 
as claimed.
    
\paragraph{Basic properties of the strategy fidelity} \quad  \\ 
  
We now list several other properties of the strategy fidelity, all of which immediately hold using the corresponding properties of the fidelity of quantum states (see references~\cite{FvdG99, NC00}).  

\begin{proposition}[Basic properties] \quad 
\label{prop:basicProp}
\begin{itemize} 
\item (Fuchs-van de Graaf inequalities for strategies) 
For any $r$-round strategies $S$ and $T$, it holds that
  \begin{equation} \label{FvdG} 
    1 - \frac{1}{2}\snorm{S-T}{\mathrm{r}} \leq \rfid(S,T) \leq \sqrt{1 - \frac{1}{4}\snorm{S-T}{\mathrm{r}}^2}.
  \end{equation}
\item (Symmetry) 
For any $r$-round strategies $S$ and $T$, it holds that 
$\rfid(S,T) = \rfid(T,S)$.
\item (Joint concavity) 
For any $r$-round strategies $S^1, \ldots, S^n$ and $T^1, \ldots, T^n$, and nonnegative scalars $\lambda_1, \ldots, \lambda_n$ satisfying $\sum_{i=1}^n \lambda_i = 1$, we have
\begin{equation}
 \rfid \left( \sum_{i=1}^n \lambda_i S^i, \sum_{i=1}^n \lambda_i T^i \right) \geq \sum_{i=1}^n \lambda_i \rfid \left( S^i, T^i \right).
\end{equation}
\item (Bounds on the strategy fidelity) 
For any $r$-round strategies $S$ and $T$, we have
${0 \leq \rfid(S,T) \leq 1}$. 
Moreover, 
$\rfid(S,T) = 1$ if and only if $S = T$
and
$\rfid(S,T) = 0$ if and only if $S$ and $T$ are perfectly distinguishable.
\end{itemize}
\end{proposition}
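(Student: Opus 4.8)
The plan is to derive every item from the corresponding property of the state fidelity $\fid$, exploiting the two variational characterizations already available: the minimization $\rfid(S,T) = \min_B \fid(\finalMixS, \finalMixT)$ of Definition~\ref{def:strat-fid} and the maximization $\snorm{S-T}{\mathrm{r}} = \max_B \tnorm{\finalMixS - \finalMixT}$ of~\eqref{snorm}. Each $\finalMixS$ and $\finalMixT$ is a genuine density operator, being the reduced state of the unit vector $\ket{\psi(\tilde A,\tilde B)}$, so the state-level facts $0 \le \fid \le 1$ and $\fid(\rho,\xi) = \fid(\xi,\rho)$ hold verbatim inside the minimization; taking $\min_B$ then gives the bounds $0 \le \rfid(S,T) \le 1$ and symmetry $\rfid(S,T) = \rfid(T,S)$ immediately.

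For the Fuchs-van de Graaf inequalities I would treat the two sides separately. For the lower bound, apply the state inequality $\fid(\finalMixS, \finalMixT) \ge 1 - \tfrac12 \tnorm{\finalMixS - \finalMixT}$ for each fixed $B$ and then minimize; since $\min_B(1 - \tfrac12 t_B) = 1 - \tfrac12 \max_B t_B$, the right-hand side collapses to exactly $1 - \tfrac12 \snorm{S-T}{\mathrm{r}}$. For the upper bound, instead evaluate the minimization at the single co-strategy $B^\ast$ attaining the maximum in~\eqref{snorm}: since $\rfid(S,T)$ is a minimum it is at most its value at $B^\ast$, where the trace distance $\tnorm{\finalMix{S}{\tilde B^\ast} - \finalMix{T}{\tilde B^\ast}}$ equals $\snorm{S-T}{\mathrm{r}}$, so the state upper bound yields $\rfid(S,T) \le \sqrt{1 - \tfrac14 \snorm{S-T}{\mathrm{r}}^2}$.

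Joint concavity is the one item needing a structural input beyond the state case: for a fixed purified co-strategy $\tilde B$ the map $S \mapsto \finalMixS$ is linear in the (Choi-Jamio{\l}kowski representation of the) strategy, since the interaction is a fixed linear operation applied to $S$. Granting this, I would write $\finalMix{\sum_i \lambda_i S^i}{\tilde B} = \sum_i \lambda_i \finalMix{S^i}{\tilde B}$ and the analogous identity for $T$, apply joint concavity of the state fidelity pointwise in $B$, and then use that the minimum of a sum dominates the sum of the minima to pull $\min_B$ into each summand, arriving at $\sum_i \lambda_i \rfid(S^i, T^i)$.

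The equality characterizations I would settle through the Fuchs-van de Graaf bounds just proved rather than re-deriving them. The upper bound shows $\rfid(S,T) = 1$ forces $\snorm{S-T}{\mathrm{r}} = 0$, hence $S = T$ because the strategy norm is a norm; conversely $S = T$ makes $\finalMixS = \finalMixT$ for every $B$, so $\rfid(S,T) = 1$. For the vanishing case, the lower bound turns $\rfid(S,T) = 0$ into $\snorm{S-T}{\mathrm{r}} \ge 2$, and since $\tnorm{\finalMixS - \finalMixT} \le 2$ forces $\snorm{S-T}{\mathrm{r}} \le 2$, we get $\snorm{S-T}{\mathrm{r}} = 2$, which is perfect distinguishability. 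The main obstacle I anticipate is twofold: first, justifying the linearity of $S \mapsto \finalMixS$ cleanly, most naturally in the Choi-Jamio{\l}kowski picture where the interaction is manifestly linear; and second, for the forward direction of the $\rfid(S,T) = 0$ case, guaranteeing that a minimizing $B$ actually exists, that is, invoking compactness of the set of compatible co-strategies so that $\rfid(S,T) = 0$ yields an honest co-strategy producing orthogonal final states.
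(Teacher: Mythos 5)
Your proposal is correct and follows essentially the same route as the paper, which simply asserts that all four items ``immediately hold using the corresponding properties of the fidelity of quantum states'' via the variational characterizations $\rfid(S,T)=\min_B \fid(\finalMixS,\finalMixT)$ and $\snorm{S-T}{\mathrm{r}}=\max_B\tnorm{\finalMixS-\finalMixT}$. The two technical points you flag---linearity of $S\mapsto\finalMixS$ in the Choi-Jamio{\l}kowski representation (needed for joint concavity) and compactness of the set of co-strategies (needed for attainment)---are exactly the facts the paper leaves implicit, both of which follow from the framework of Ref.~\cite{GutoskiW07}.
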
  

We later discuss that the strategy version of the Fuchs-van de Graaf inequalities is crucial to our cryptographic applications. This was also used implicitly in~\cite{ChiribellaD+09b}. 

\paragraph{Monotonicity of the strategy fidelity and the strategy norm} \quad \\ 

The fidelity for quantum states is known to be monotonic under channels, meaning that
\begin{equation}
  \fid(\Phi(\rho),\Phi(\xi)) \geq \fid(\rho,\xi)
\end{equation}
for any choice of states $\rho,\xi$ and channel $\Phi$ \cite{BarnumC+96}. 
It was observed in Ref.~\cite{10.1063/1.1904510} 
that the fidelity function of \emph{quantum channels} (that aligns with our definition of strategy fidelity for a $1$-round interaction) is also monotonic under composition (both left and right) with another channel. That is, 
\begin{equation}
  \fid_1(\Phi \comp\Delta,\Psi \comp\Delta) \geq \fid_1(\Phi,\Psi)
  \qquad \textnormal{and} \qquad
  \fid_1(\Delta' \comp\Phi,\Delta' \comp\Psi) \geq \fid_1(\Phi,\Psi)
\end{equation}
for all channels $\Phi,\Psi:\lin{\cX}\to\lin{\cY}$ and $\Delta$ into $\lin{\cX}$ and $\Delta'$ on $\lin{\cY}$.
However, there are other physical maps on channels that cannot in general be written as a composition with another channel. 
Chiribella, D'Ariano, and Perinotti call such mappings \emph{supermaps} and characterize them in Ref.~\cite{ChiribellaD+08f}.  
Thus, the natural generalization of monotonicity of the kind described above 
would be the analogous statement involving supermaps.
We provide an even stronger result concerning monotonicity of the strategy fidelity using the following definition.
 
\begin{definition}
A \emph{strategy supermap} is a completely positive linear map (with respect to Choi-Jamio{\l}kowski representations) that maps $r$-round strategies
to $r'$-round strategies. It is understood that $r$-round strategies are for some choice of input spaces $\cX_1,\dots,\cX_r$ and output spaces $\cY_1,\dots,\cY_r$ and $r'$-round strategies are for some choice of input spaces $\cX'_1,\dots,\cX'_{r'}$ and output spaces $\cY'_1,\dots,\cY'_{r'}$. 
\end{definition}

The definition of strategy supermaps are inspired by 
physically realizable maps from $r$-round strategies to $r'$-round 
strategies studied by Chiribella, D'Ariano, and Perinotti~\cite{ChiribellaD+09a}.
Our result, however, is purely mathematical and does not require strategy supermaps to be physically realizable.

\begin{theorem}[Monotonicity of the strategy fidelity]
\label{mon-fid}
  For all natural numbers $r,r'$, all $r$-round strategies $S,T$, and all strategy supermaps $\Upsilon$ from $r$-round strategies to $r'$-round strategies, it holds that
  \begin{equation}
    \fid_{\mathrm{r'}}(\Upsilon(S),\Upsilon(T)) \geq \rfid(S,T).
  \end{equation} 
\end{theorem}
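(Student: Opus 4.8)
The plan is to reduce the statement to the monotonicity of the ordinary state fidelity under quantum channels, $\fid(\Phi(\rho),\Phi(\xi))\ge\fid(\rho,\xi)$~\cite{BarnumC+96}, by showing that every interaction of the \emph{output} strategies $\Upsilon(S),\Upsilon(T)$ with an $r'$-round co-strategy can be simulated by an interaction of the \emph{input} strategies $S,T$ with a single induced $r$-round co-strategy, followed by discarding an environment. Unfolding Definition~\ref{def:strat-fid}, it suffices to prove that for every $r'$-round co-strategy $B'$ compatible with $\Upsilon(S),\Upsilon(T)$ there is an $r$-round co-strategy $B$ (depending on $\Upsilon$ and $B'$, but not on $S$ or $T$) and a single channel $\Phi$ with
\begin{equation}
\finalMix{\Upsilon(S)}{\tilde B'}=\Phi\big(\finalMix{S}{\tilde B}\big)\qquad\text{and}\qquad\finalMix{\Upsilon(T)}{\tilde B'}=\Phi\big(\finalMix{T}{\tilde B}\big).
\end{equation}
Granting this, monotonicity of the state fidelity under the common channel $\Phi$ gives, for that fixed $B'$,
\begin{equation}
\fid\big(\finalMix{\Upsilon(S)}{\tilde B'},\finalMix{\Upsilon(T)}{\tilde B'}\big)=\fid\big(\Phi(\finalMix{S}{\tilde B}),\Phi(\finalMix{T}{\tilde B})\big)\ge\fid\big(\finalMix{S}{\tilde B},\finalMix{T}{\tilde B}\big)\ge\rfid(S,T),
\end{equation}
where the last inequality is merely the defining minimization over input co-strategies. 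Taking the minimum over all $B'$ then yields $\fid_{\mathrm{r'}}(\Upsilon(S),\Upsilon(T))\ge\rfid(S,T)$, as desired. The fact that $V$ and $B$ are the same for $S$ and $T$ is exactly what makes this chain legitimate.

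To construct $B$ and $\Phi$, I would exploit complete positivity of $\Upsilon$ to obtain a Stinespring-type dilation at the level of Choi--Jamio{\l}kowski representations: an operator $V$ carrying an environment $\cE$ with $\Upsilon(\cdot)=\ptr{\cE}{V(\cdot)V^{*}}$. Representing $S$ by the Choi vector of a purifying isometry $\tilde S$, the vector $V$ applied to this Choi vector purifies the Choi operator of $\Upsilon(S)$, with its purifying system enlarged by $\cE$; the very same $V$ purifies $\Upsilon(T)$ from $\tilde T$. Interacting $\Upsilon(S)$ with $\tilde B'$ and tracing out its final memory is, after reorganizing the link products, the same as first interacting $S$ with the co-strategy $B$ obtained by absorbing $V$ together with the pre- and post-processing of $\tilde B'$, and then tracing out the combined leftover memory produced by this reorganization. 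That residual partial trace (over $\cE$ and the auxiliary memory) is the common channel $\Phi$, and only the factor $\tilde S\mapsto\tilde T$ differs between the two sides.

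The main obstacle is precisely the structural claim that this reorganization produces a bona fide $r$-round co-strategy $B$ — a positive semidefinite operator obeying the recursive partial-trace (causality) constraints of co-strategies — and not merely some positive semidefinite operator. Complete positivity alone yields the dilation $V$, but it is the hypothesis that $\Upsilon$ maps strategies to strategies that must be leveraged to control the causal structure of $B$; I would establish this by careful bookkeeping of the link product in the Choi formalism, using the duality between the cone of strategies and the cone of co-strategies, so that preservation of the strategy constraints dualizes to the co-strategy constraints on $B$. An alternative, more operational route avoids exhibiting $B$ explicitly and instead invokes the promised Uhlmann-type theorem for the strategy fidelity: one pushes a pair of \emph{optimal} input purifications forward through $V$ and $\tilde B'$, and since isometries preserve inner products, the overlap of the resulting purifications of the output final states is at least the optimal input overlap; as the output fidelity is the supremum of such overlaps, the inequality follows. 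Either way, the delicate point is guaranteeing that the transferred co-strategy respects causality without assuming that $\Upsilon$ is physically realizable.
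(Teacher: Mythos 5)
Your proposal's skeleton coincides with the paper's own proof in its two essential ingredients: (i) a Stinespring dilation $\Upsilon(\cdot)=\ptr{\cM}{M(\cdot)M^*}$, so that $(M\ot I_{\cZ_r})\vecket{\tilde S}$ and $(M\ot I_{\cZ_r})\vecket{\tilde T}$ purify $\Upsilon(S)$ and $\Upsilon(T)$ with final memory $\cZ_r\ot\cM$; and (ii) the fact that the pulled-back operator $B:=\Upsilon^*(B')=M^*(I_\cM\ot B')M$ is a genuine $r$-round co-strategy, which the paper proves exactly the way you propose to: $\Upsilon^*$ is completely positive, $\inner{\Upsilon^*(B')}{S}=\inner{B'}{\Upsilon(S)}=1$ for every $r$-round strategy $S$, and the strategy/co-strategy cone duality from Gutoski--Watrous then forces $\Upsilon^*(B')$ to be a co-strategy (this is Lemmas~\ref{useful} and~\ref{lem:UpsilonStar}). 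Moreover, your ``alternative, more operational route'' at the end is essentially the paper's finish: for each fixed $B'$ one restricts Alice's Uhlmann unitary to the form $U\ot I_\cM$, observes via Lemma~\ref{lm:linear-in-B} that $\vecbra{\tilde S}(M^*\ot I_{\cZ_r})(U\ot I_\cM\ot B')(M\ot I_{\cZ_r})\vecket{\tilde T}=\vecbra{\tilde S}(U\ot\Upsilon^*(B'))\vecket{\tilde T}$, and then enlarges Bob's minimization from the image of $\Upsilon^*$ to all $r$-round co-strategies, recovering $\rfid(S,T)$.

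Where your primary route deviates, there is a genuine gap. You want the stronger, state-level simulation $\finalMix{\Upsilon(S)}{\tilde B'}=\Phi(\finalMix{S}{\tilde B})$ and $\finalMix{\Upsilon(T)}{\tilde B'}=\Phi(\finalMix{T}{\tilde B})$ for a single channel $\Phi$, and you justify it by ``reorganizing the link products,'' i.e.\ by absorbing the dilation together with Bob's isometries into a circuit for an $r$-round interaction followed by discarding an environment. That circuit picture is precisely physical realizability of $\Upsilon$ (the quantum-comb structure), which the paper explicitly does \emph{not} assume: a map that is merely completely positive and sends strategies to strategies need not factor dynamically in this way, so $\Phi$ cannot be obtained as a literal ``residual partial trace.'' The claim is salvageable, but only indirectly: taking $B=\Upsilon^*(B')$, Lemma~\ref{lm:linear-in-B} applied to all four pairs drawn from $\{\tilde S,\tilde T\}$ gives, for every $K\in\lin{\cZ_r}$,
\begin{equation}
\bra{\psi(\tilde S,\tilde B)}\Pa{K\ot I_{\cW_r}}\ket{\psi(\tilde T,\tilde B)}
=\bra{\psi(\widetilde{\Upsilon(S)},\tilde B')}\Pa{K\ot I_{\cM}\ot I_{\cW'}}\ket{\psi(\widetilde{\Upsilon(T)},\tilde B')},
\end{equation}
where $\cW'$ is the final memory of $\tilde B'$; hence the two pairs of final pure states have identical reduced operators, including cross terms, on Alice's space $\cZ_r$, and a purification-transfer argument (applied to the block vectors $\ket{\psi(\tilde S,\tilde B)}\ot\ket{1}+\ket{\psi(\tilde T,\tilde B)}\ot\ket{2}$ and their output counterparts) produces one partial isometry on Bob's side carrying the input final states to the output ones; composing with $\ptr{\cM}{}$ yields $\Phi$. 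So your route can be completed, but this missing argument already contains the overlap comparison, making the detour through state-fidelity monotonicity strictly more work than the paper's direct Uhlmann-overlap bookkeeping.
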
 

We can also prove a similar monotonicity result for the strategy norm.
By analogy with the fidelity, the trace norm is known to be monotonic under channels, meaning that
\begin{equation}
  \tnorm{\Phi(X)}\leq\tnorm{X}
\end{equation}
for all operators $X$ and all channels $\Phi$ \cite{Ruskai94}.
Similarly, the diamond norm can be shown to be monotonic under composition (both left and right) with channels, meaning that
\begin{equation}
  \dnorm{\Phi\comp\Delta}\leq\dnorm{\Phi}
  \qquad\textnormal{and}\qquad
  \dnorm{\Delta'\comp\Phi}\leq\dnorm{\Phi}
\end{equation}
for all linear maps $\Phi:\lin{\cX}\to\lin{\cY}$ and all channels $\Delta$ into $\lin{\cX}$ and $\Delta'$ on $\lin{\cY}$.
As with the fidelity function for quantum channels, defined in Ref.~\cite{10.1063/1.1904510}, monotonicity of the diamond norm under arbitrary supermaps has not yet been observed, nor has monotonicity of the strategy norm under strategy supermaps.

We now establish a monotonicity result for the strategy norm, defined below. 
\begin{definition}[Strategy norm \cite{ChiribellaD+08b}, \cite{Gutoski12}] \label{def:str_norm} 
Consider $\cX_1,\ldots,\cX_r$ and $\cY_1,\ldots,\cY_r$ as input and output spaces of $r$-round strategies, respectively. The strategy norm of a Hermitian operator $H$ acting on $\kprod{\cY}{1}{r}\ot\kprod{\cX}{1}{r}$ is defined as 
\begin{equation}
\Snorm{H}{\mathrm{r}}:=
\max_{B_0,B_1\succeq 0}\Set{\Inner{B_0-B_1}{H} : \textnormal{$B_0 + B_1$ is an $r$-round co-strategy}},
\end{equation}
where the maximization is over all positive semidefinite operators $B_0,B_1$ acting on $\kprod{\cY}{1}{r}\ot\kprod{\cX}{1}{r}$ such that $B_0 + B_1$ is an $r$-round co-strategy having input spaces $\cY_1,\ldots,\cY_r$ and output spaces $\cX_1,\ldots,\cX_r$.
\end{definition} 
Given $H$ as a difference of two $r$-round strategies $S$ and $T$, Definition~\ref{def:str_norm} implies Eqn.~\eqref{snorm}.
 
\begin{theorem}[Monotonicity of the strategy norm] 
\label{mon-norm}
  For all natural numbers $r,r'$, all Hermitian operators $H$ acting on $\kprod{\cY}{1}{r}\ot\kprod{\cX}{1}{r}$ and strategy supermaps $\Upsilon$ from $r$-round strategies to $r'$-round strategies, it holds that
  \begin{equation}
    \Snorm{\Upsilon(H)}{\mathrm{r'}} \leq \Snorm{H}{\mathrm{r}}. 
  \end{equation}
\end{theorem}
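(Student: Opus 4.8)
The plan is to reduce the statement to a property of the Hilbert--Schmidt adjoint $\Upsilon^*$ of the supermap. Recall that $\Upsilon$ acts on Choi--Jamio{\l}kowski representations as a completely positive (in particular positive) linear map from $\lin{\kprod{\cY}{1}{r}\ot\kprod{\cX}{1}{r}}$ to $\lin{\kprod{\cY'}{1}{r'}\ot\kprod{\cX'}{1}{r'}}$, and let $\Upsilon^*$ denote its adjoint with respect to $\inner{\cdot}{\cdot}$, so that $\inner{\Upsilon^*(C)}{D}=\inner{C}{\Upsilon(D)}$ for all operators $C,D$ on the appropriate spaces. Since each strategy norm is defined by a maximization over pairs $(B_0,B_1)$ of positive semidefinite operators whose sum is a co-strategy, it suffices to show that $\Upsilon^*$ transports any feasible pair for the maximization defining $\Snorm{\Upsilon(H)}{\mathrm{r'}}$ to a feasible pair for the maximization defining $\Snorm{H}{\mathrm{r}}$, while preserving the relevant inner product.

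First I would check that $\Upsilon^*$ is positive: for any $Q\succeq 0$ and any $P\succeq 0$ we have $\inner{\Upsilon^*(Q)}{P}=\inner{Q}{\Upsilon(P)}\geq 0$, because $\Upsilon(P)\succeq 0$ by positivity of $\Upsilon$ and the inner product of two positive semidefinite operators is nonnegative; since the positive semidefinite cone is self-dual and $P$ was arbitrary, $\Upsilon^*(Q)\succeq 0$. Next, and this is the crux, I would show that $\Upsilon^*$ maps $r'$-round co-strategies to $r$-round co-strategies. Here I invoke the strategy--co-strategy duality of~\cite{GutoskiW07}: a positive semidefinite operator $B$ is an $r$-round co-strategy if and only if $\inner{B}{S}=1$ for every $r$-round strategy $S$ (and symmetrically with the roles exchanged). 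Given an $r'$-round co-strategy $B'$, the previous step gives $\Upsilon^*(B')\succeq 0$, and for every $r$-round strategy $S$ we compute $\inner{\Upsilon^*(B')}{S}=\inner{B'}{\Upsilon(S)}=1$, since $\Upsilon(S)$ is an $r'$-round strategy (this is exactly the defining property of a strategy supermap) and $B'$ pairs to $1$ with every $r'$-round strategy. By the duality characterization, $\Upsilon^*(B')$ is an $r$-round co-strategy.

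Finally I would assemble the bound. Let $(B'_0,B'_1)$ attain the maximum defining $\Snorm{\Upsilon(H)}{\mathrm{r'}}$, so $B'_0,B'_1\succeq 0$ and $B'_0+B'_1$ is an $r'$-round co-strategy. Set $B_0:=\Upsilon^*(B'_0)$ and $B_1:=\Upsilon^*(B'_1)$. By positivity of $\Upsilon^*$ both are positive semidefinite, and $B_0+B_1=\Upsilon^*(B'_0+B'_1)$ is an $r$-round co-strategy by the previous step, so $(B_0,B_1)$ is feasible for the maximization defining $\Snorm{H}{\mathrm{r}}$. Hence
\begin{equation}
\Snorm{H}{\mathrm{r}} \;\geq\; \inner{B_0-B_1}{H} \;=\; \inner{\Upsilon^*(B'_0-B'_1)}{H} \;=\; \inner{B'_0-B'_1}{\Upsilon(H)} \;=\; \Snorm{\Upsilon(H)}{\mathrm{r'}},
\end{equation}
which is the desired inequality.

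The main obstacle is the co-strategy-preservation step, and within it the appeal to the duality characterization of co-strategies as exactly the positive semidefinite operators that pair to $1$ with every strategy; the rest is short formal manipulation. I would take care to verify that this duality holds as a genuine ``if and only if'' (so that membership of $\Upsilon^*(B')$ in the co-strategy set can be certified purely by its inner products against strategies), and that the normalization convention under which $\inner{S}{B}=1$ for compatible $S,B$ matches the convention implicit in Definition~\ref{def:str_norm}. It is worth noting that full complete positivity of $\Upsilon$ is not needed beyond plain positivity; the argument goes through for any positive, strategy-preserving linear map.
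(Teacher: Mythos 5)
Your proposal is correct and follows essentially the same route as the paper: pass to the adjoint $\Upsilon^*$, show it sends $r'$-round co-strategies to $r$-round co-strategies, and transport the optimal feasible pair $(B'_0,B'_1)$ through $\Upsilon^*$. The one point you flagged for verification---that a positive semidefinite operator pairing to $1$ with every $r$-round strategy must itself be an $r$-round co-strategy---is precisely what the paper proves as a standalone lemma (using the polar-dual characterization from~\cite{GutoskiW07} together with a trace argument), so your appeal to that duality is legitimate and your argument matches the paper's.
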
 

\paragraph{Operational interpretation (min-max properties)}    \quad \\ 

Here we propose an operationally motivated generalization of Uhlmann's Theorem~\cite{uhl76} to the strategy fidelity. In so doing we elucidate the need for a min-max theorem.  
Recall that Uhlmann's Theorem for quantum states asserts that the fidelity $\fid(\rho,\xi)$ between any two quantum states $\rho$ and $\xi$, acting on $\cX$, is given by
\begin{equation} \label{eq:recall-Uhlmann}
  \fid(\rho,\xi) = \max_U \Abs{\bra{\phi}(U\ot I_\cX)\ket{\psi}}
\end{equation}
where $\ket{\phi},\ket{\psi}\in\cX \otimes \cY$ are any purifications of $\rho,\xi$ and the maximization is over all unitaries $U$ acting on $\cY$ alone. 

Intuitively, $\rfid(S,T)$ should quantify the extent to which any purifications $\tilde S,\tilde T$ of two strategies $S,T$ can be made to look the same by acting only on the final memory space $\cZ_r$.  
It follows immediately from the definition of the strategy fidelity and Uhlmann's Theorem that
\begin{equation} \label{eq:naive-min-max}
  \rfid(S,T) = \min_B \fid(\finalMixS,\finalMixT) = \min_B \max_U \Abs{ \bra{\psi(\tilde S,\tilde B)}\Pa{U\ot I_{\cW_r}}\ket{\psi(\tilde T,\tilde B)} }
\end{equation}
where, again, the maximization is over all unitaries $U$ acting on $\cZ_r$ alone.

Notice the order of minimization and maximization in \eqref{eq:naive-min-max}. This could be viewed as a competitive game between Alice (who plays according to $S$ or $T$) and Bob (who plays according to any arbitrary co-strategy $B$) in which Bob is trying to distinguish $S$ from $T$ and Alice is trying to make $S$ and $T$ look the same.
To these ends, Bob chooses his strategy $B$ so as to minimize the overlap $\abs{\braket{\psi(\tilde S,\tilde B)}{\psi(\tilde T,\tilde B)}}$;
given such a choice $B$ for Bob, Alice's responds with a unitary $U$ that maximizes this overlap.

The problem is that Alice's choice of $U$ may depend upon Bob's co-strategy $B$.
The task of distinguishing $S$ from $T$ should depend only upon $S$ and $T$---Alice should not be granted the ability to tweak $S$ or $T$ after she has acquired knowledge of Bob's specific choice of distinguishing co-strategy $B$.
From an operational perspective, it would be much more desirable if the order of minimization and maximization in \eqref{eq:naive-min-max} were reversed. Alice should select her unitary $U$ so as to make $S$ look as much as possible like $T$ \emph{before} Bob selects his distinguishing co-strategy $B$. Thus, we require a type of \emph{min-max theorem}.
 
The set of all co-strategies $B$ for Bob is compact and convex~\cite{GutoskiW07}, but it is not at all clear that the objective function in \eqref{eq:naive-min-max} is convex in $B$; we show later (Lemma~\ref{lm:linear-in-B}) that this is indeed the case. 
However, the set of all unitaries $U$ for Alice is not a convex set.
One might think that we could extend the domain of maximization to the convex hull of the unitaries in the hopes that there is a saddle point $(U,B)$ with $U$ unitary.
Unfortunately, saddle points do not in general occur at extreme points of the domain, so we are not guaranteed that such a unitary saddle point exists.
Thus, a min-max theorem for the strategy fidelity involving unitaries is not so easily forthcoming. 

However, if we allow Alice to apply a general \emph{quantum channel}, we are able to obtain a min-max result, as stated below.  

\begin{theorem}[Strategy generalization of Uhlmann's Theorem]\label{thm:StrUhlmann}
Let $S,T$ be $r$-round strategies and let $\tilde S,\tilde T$ be any purifications of $S,T$. Let
$\ket{\psi(\tilde S,\tilde B)}$, $\ket{\psi(\tilde T,\tilde B)}$ be as defined in Definition~\ref{def:purestrat}.
We have 
\begin{align}
\rfid(S,T)^2 
&= \max_\Xi \, \min_B \, 
\bra{\psi(\tilde S,\tilde B)} \Br{ \Pa{\Xi\otimes I_{\lin{\cW_r}}}\Pa{\kb{\psi(\tilde T,\tilde B)}} } \ket{\psi(\tilde S,\tilde B)} \label{2ndeqn}
\end{align}
where the minimum is over all $r$-round pure co-strategies $\tilde{B}$ and the maximum is over all  quantum channels $\Xi$ acting on $\cZ_r$ alone. 
\end{theorem}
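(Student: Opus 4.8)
The plan is to deduce the claimed identity from the \emph{easy} ordering together with a single minimax exchange. Abbreviate the quantity optimized in \eqref{2ndeqn} by
\[
 h(\Xi,B) := \bra{\psi(\tilde S,\tilde B)}\Br{\Pa{\Xi\ot I_{\lin{\cW_r}}}\Pa{\kb{\psi(\tilde T,\tilde B)}}}\ket{\psi(\tilde S,\tilde B)}.
\]
Writing $\Xi$ in Kraus form $\Xi(\cdot)=\sum_k K_k(\cdot)K_k^*$ with $\sum_k K_k^*K_k=\identity$ and setting $C_B:=\ptr{\cW_r}{\ket{\psi(\tilde T,\tilde B)}\bra{\psi(\tilde S,\tilde B)}}$ (an operator on $\cZ_r$), a short computation gives $h(\Xi,B)=\sum_k\abs{\Tr{K_kC_B}}^2$, where each term is the overlap $\Tr{K_kC_B}=\bra{\psi(\tilde S,\tilde B)}(K_k\ot I_{\cW_r})\ket{\psi(\tilde T,\tilde B)}$. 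Two facts then drive the proof: for every fixed $B$ the inner maximization over $\Xi$ equals $\fid(\finalMixS,\finalMixT)^2$, and $h$ is simultaneously convex in $B$ and concave in $\Xi$, so $\min_B$ and $\max_\Xi$ may be exchanged.

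First I would evaluate $\max_\Xi h(\Xi,B)$ for fixed $B$ (choosing the purifications to share the final memory space $\cZ_r$). Restricting $\Xi$ to unitary channels $U(\cdot)U^*$ reduces $h$ to $\abs{\bra{\psi(\tilde S,\tilde B)}(U\ot I_{\cW_r})\ket{\psi(\tilde T,\tilde B)}}^2$, and since $\ket{\psi(\tilde S,\tilde B)}$ and $\ket{\psi(\tilde T,\tilde B)}$ purify $\finalMixS$ and $\finalMixT$ with purifying system $\cZ_r$, Uhlmann's Theorem (exactly as used to obtain \eqref{eq:naive-min-max}) gives $\max_U\abs{\Tr{UC_B}}=\tnorm{C_B}=\fid(\finalMixS,\finalMixT)$; hence $\max_\Xi h(\Xi,B)\ge\fid(\finalMixS,\finalMixT)^2$. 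For the matching upper bound I would take a singular value decomposition $C_B=\sum_i s_i\ket{u_i}\bra{v_i}$ and use $\sum_k K_k^*K_k=\identity$ to obtain $\sum_k\abs{\bra{v_i}K_k\ket{u_i}}^2\le 1$ for each $i$; two applications of the Cauchy--Schwarz inequality then yield $\sum_k\abs{\Tr{K_kC_B}}^2\le\Pa{\sum_i s_i}^2=\tnorm{C_B}^2$, so $\max_\Xi h(\Xi,B)=\fid(\finalMixS,\finalMixT)^2$. Minimizing over $B$ and invoking Definition~\ref{def:strat-fid} gives $\min_B\max_\Xi h(\Xi,B)=\rfid(S,T)^2$.

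Next I would justify exchanging the optimizations with Sion's minimax theorem. The co-strategies $B$ form a convex compact set \cite{GutoskiW07}, and the channels $\Xi$ acting on $\cZ_r$ form a convex compact set; moreover $h$ is a polynomial in the relevant matrix entries and hence continuous. For fixed $B$, $h$ is linear, so concave, in $\Xi$. For fixed $\Xi$, each overlap $\Tr{K_kC_B}$ is \emph{linear} in $B$: the isometry $\tilde B$ enters once holomorphically (through the $\tilde T$-branch ket) and once anti-holomorphically (through the $\tilde S$-branch bra), and tracing out $\cW_r$ contracts the two copies into the co-strategy operator $B$, which is the content of Lemma~\ref{lm:linear-in-B}; the value is moreover independent of the chosen purification by Remark~\ref{rem:UniInvariance}. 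Consequently $h(\Xi,B)=\sum_k\abs{\Tr{K_kC_B}}^2$ is a sum of squared moduli of affine functionals of $B$, hence convex in $B$. Sion's theorem then applies and gives $\max_\Xi\min_B h(\Xi,B)=\min_B\max_\Xi h(\Xi,B)=\rfid(S,T)^2$, which is exactly \eqref{2ndeqn}.

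The main obstacle is the convexity of the inner objective in $B$, i.e.\ Lemma~\ref{lm:linear-in-B}: a priori $h$ involves four factors of $\tilde B$ and is a genuine quadratic in $B$, so nothing guarantees a minimax theorem applies until the overlaps are shown to be linear in the co-strategy. This is also precisely why channels rather than unitaries are essential: the Uhlmann maximizer $U$ depends on $B$, the unitaries are not convex, and no unitary saddle point need exist, whereas enlarging the maximization to all channels $\Xi$ both convexifies the domain and---by the Cauchy--Schwarz bound above---leaves the per-$B$ optimum unchanged, making the exchange of $\min_B$ and $\max_\Xi$ simultaneously valid and value-preserving.
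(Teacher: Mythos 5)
Your proof is correct, but it takes a genuinely different route from the paper's at the decisive step. The paper never applies a minimax theorem to the channel-level objective $h(\Xi,B)$ itself: it first establishes Lemma~\ref{lem:linear-in-B}, namely $\rfid(S,T)=\max_K\min_B\,\Re\,\vecbra{\tilde S}\Pa{K\ot B}\vecket{\tilde T}$ with $K$ ranging over the unit ball $\uball{\cZ_r}$, where the exchange of $\min$ and $\max$ is justified because that objective is \emph{bilinear} in $(K,B)$; it then converts the contraction $K$ into a genuine channel by hand via the two-Kraus-operator channel $\Xi_K(X)=KXK^*+\bar K X\bar K^*$ with $\bar K=\sqrt{I_{\cZ_r}-K^*K}$, drops the nonnegative $\bar K$-term to obtain $\rfid(S,T)^2\le\max_\Xi\min_B h$, and closes with the max--min inequality together with the assertion (made without proof) that Uhlmann's theorem still holds when unitaries are replaced by channels. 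You bypass Lemma~\ref{lem:linear-in-B} entirely and instead exchange the optimizations directly at the channel level: the two ingredients you supply for this---convexity of $h$ in $B$, obtained from Lemma~\ref{lm:linear-in-B} by writing $h$ as a sum of squared moduli of $B$-affine functionals, and the per-$B$ evaluation $\max_\Xi h(\Xi,B)=\fid(\finalMixS,\finalMixT)^2$, which you prove via the singular value decomposition of $C_B$ and Cauchy--Schwarz rather than citing---are exactly what the paper's argument avoids needing (the first) or takes for granted (the second). What each approach buys: yours is more self-contained, since the channel version of Uhlmann's theorem is proved rather than asserted, and it isolates the convex--concave structure of $h$ that makes Sion's theorem applicable; the paper's route yields the explicit near-optimal channel with Kraus pair $\set{K,\bar K}$, $\norm{K}\le 1$, and the intermediate identity of Lemma~\ref{lem:linear-in-B}, which is precisely the parametrization reused in the semidefinite programming formulation of Theorem~\ref{SDP}.
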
 

Note that similar min-max results are derived in~\textup{\cite{10.1063/1.1904510}} and~\textup{\cite{ChiribellaD+09b}}. 
It will be convenient to define the following quantum channel. 

\begin{definition} \label{def:fidchannel} 
A \emph{strategy fidelity-achieving} channel $\Xi$ is a channel which attains the  maximum in \eqref{2ndeqn}, above. 
\end{definition}

\paragraph{Semidefinite programming formulation of strategy fidelity}    \quad \\ 
 
It was shown in~\cite{Gutoski12} that the strategy norm has a semidefinite programming formulation. Also, the fidelity of quantum states has semidefinite programming formulations, see~\cite{Watrous09, Watrous13} for examples.  It is natural to ask whether the strategy fidelity has such a formulation. We answer this question in the affirmative, below. 

\begin{theorem}[Semidefinite programming formulation of strategy fidelity] 
\label{SDP}
Fix any purifications $\tilde S$ and $\tilde T$ of $r$-round strategies $S$ and $T$, respectively. Then $\rfid(S,T)^2$ is equal to the optimal objective function value of the following semidefinite program: 
\begin{equation}
 \begin{array}{rrrllllll}
\rfid(S,T)^2 = & {\max} & t \\ 
& \textup{subject to} & t I_{\cX_1} & \preceq & \ptr{\cY_1}{R_1}  \\ 
& & R_j \otimes I_{\cX_{j+1}} & \preceq & \ptr{\cY_{j+1}}{R_{j+1}}, \, \textup{ for } j \in \{1, \ldots, r-1 \}, \\  
& & R_r & \preceq & \frac{1}{2} \Ptr{\cZ_r}{\Pa{K \otimes I_{\kprod{\cY}{1}{r}\ot\kprod{\cX}{1}{r}}} \vecket{\tilde T}\vecbra{\tilde S}} + h.c. \\ 
& & \left[ \begin{array}{cc} 
I_{\cZ_r} & K \\ 
K^* & I_{\cZ_r} 
\end{array} \right] & \succeq & 0 
\end{array}
 \end{equation} 
where the variables $R_j$ are Hermitian matrices acting on $\kprod{\cY}{1}{j}\ot\kprod{\cX}{1}{j}$ for each $j \in \{ 1, \ldots, r \}$, and $h.c.$ denotes the Hermitian conjugate. 
Note that the optimization is over the Hermitian matrices $R_1, \ldots, R_r$, the scalar $t$, and a (not neccessarily Hermitian) matrix $K$. (The last constraint requires $K$ to be in the convex hull of the set of unitaries acting on space $\cZ_r$.)
\end{theorem}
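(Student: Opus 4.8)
The plan is to obtain the program as the semidefinite-programming dual of the min-max identity proved in Theorem~\ref{thm:StrUhlmann}. Starting from the identity~\eqref{2ndeqn}, which already reads $\rfid(S,T)^2 = \max_\Xi\min_B\,\bra{\psi(\tilde S,\tilde B)}\Br{(\Xi\ot I_{\cW_r})(\kb{\psi(\tilde T,\tilde B)})}\ket{\psi(\tilde S,\tilde B)}$, I would first fix the channel $\Xi$ on $\cZ_r$ and rewrite the inner objective as a linear functional of the co-strategy $B$. By Lemma~\ref{lm:linear-in-B} this inner objective is (affine-)linear in $B$, so for each $\Xi$ there is a Hermitian operator $J(\Xi)$ on $\kprod{\cY}{1}{r}\ot\kprod{\cX}{1}{r}$, depending linearly on $\Xi$, with the inner problem equal to $\min_B\,\inner{J(\Xi)}{B}$ over the convex set of $r$-round co-strategies. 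Crucially, $J(\Xi)$ is assembled from the overlap operator $\vecket{\tilde T}\vecbra{\tilde S}$ together with the action of $\Xi$ on the final memory space $\cZ_r$.

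The second step is to make the inner minimization an explicit SDP and dualize it. I would invoke the recursive characterization of the co-strategy cone from~\cite{GutoskiW07}: an operator $B\succeq 0$ is an $r$-round co-strategy exactly when it is the last member of a descending family of positive operators satisfying the usual linear consistency (causality) constraints relating successive rounds, together with a single initial normalization condition. Dualizing $\min_B\,\inner{J(\Xi)}{B}$ subject to these constraints produces a maximization whose dual variables are precisely the Hermitian operators $R_1,\dots,R_r$ attached to the consistency constraints and the scalar $t$ attached to the normalization. The base relation $tI_{\cX_1}\preceq\ptr{\cY_1}{R_1}$ and the telescoping relations $R_j\ot I_{\cX_{j+1}}\preceq\ptr{\cY_{j+1}}{R_{j+1}}$ then appear as the transposed consistency constraints, while the top-level constraint couples $R_r$ to $J(\Xi)$. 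Strong duality (hence equality of the two values) follows from Slater's condition, since the co-strategy cone has nonempty relative interior, e.g. the co-strategy that sends maximally mixed messages.

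The third step is to eliminate the channel $\Xi$ in favour of the matrix $K$. Expanding $J(\Xi)$ through a Stinespring/Kraus description of $\Xi$ on $\cZ_r$ and applying Uhlmann's Theorem on the $\cZ_r$ factor, the dependence of the top-level constraint on $\Xi$ collapses to the single linear expression $\tfrac12\Ptr{\cZ_r}{\Pa{K\ot I_{\kprod{\cY}{1}{r}\ot\kprod{\cX}{1}{r}}}\vecket{\tilde T}\vecbra{\tilde S}} + h.c.$, where $K$ ranges over the convex hull of the unitaries acting on $\cZ_r$. Because the resulting objective is linear in $K$, maximizing over channels is equivalent to maximizing over this convex hull; membership is encoded by the block-positivity constraint of the theorem, equivalently $\onorm{K}\le 1$. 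Merging the outer maximization over $K$ with the dualized inner maximization over $R_1,\dots,R_r$ and $t$ yields exactly the stated program.

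The step I expect to be the main obstacle is the simultaneous handling of the two optimizations: one must justify that the outer $\max_\Xi$ and the dualized inner maximization combine into a single joint SDP with no duality gap, which requires both the Slater-based strong-duality argument for the co-strategy program and the verification that the channel optimization genuinely reduces to a linear program over the contraction $K$. The most error-prone part is the partial-trace and Choi/vectorization bookkeeping across all $r$ levels—matching the telescoping inequalities to the transposed consistency relations, tracking the placement of the identities $I_{\cX_{j+1}}$, and confirming that the final value is $\rfid(S,T)^2$ rather than some other power. I would validate the bookkeeping against the one-round reduction (where the program must collapse to a known SDP for the fidelity of the final reduced states) before asserting the general case.
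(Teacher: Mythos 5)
Your high-level architecture does match the paper's: characterize co-strategies by the recursive conditions of \cite{GutoskiW07}, dualize the inner minimization to obtain the variables $t,R_1,\dots,R_r$ and the telescoping constraints, invoke strong duality via strict feasibility, encode $\norm{K}\leq 1$ by the block positive-semidefinite constraint, and merge the two maximizations. The genuine gap is in your first two steps. You start from the channel form of the min-max (Theorem~\ref{thm:StrUhlmann}) and assert that, for a fixed channel $\Xi$, the inner objective is affine-linear in $B$ ``by Lemma~\ref{lm:linear-in-B}''. It is not. Writing $\Xi$ in Kraus form $\Xi(X)=\sum_i A_iXA_i^*$ and applying Lemma~\ref{lm:linear-in-B} to each Kraus operator gives
\begin{equation}
\bra{\psi(\tilde S,\tilde B)}\Br{\Pa{\Xi\ot I_{\lin{\cW_r}}}\Pa{\kb{\psi(\tilde T,\tilde B)}}}\ket{\psi(\tilde S,\tilde B)}
=\sum_i \Abs{\vecbra{\tilde S}\Pa{A_i\ot B}\vecket{\tilde T}}^2 ,
\end{equation}
a sum of squared moduli of linear functionals of $B$, i.e.\ a convex \emph{quadratic} in $B$, not a linear one. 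Hence the Hermitian operator $J(\Xi)$ with inner value $\min_B\inner{J(\Xi)}{B}$ that your dualization requires does not exist, and the top-level constraint ``coupling $R_r$ to $J(\Xi)$'' has nothing to couple to. Your third step---``applying Uhlmann's Theorem on the $\cZ_r$ factor'' to trade $\Xi$ for $K$ inside the dualized program---cannot repair this: Uhlmann's theorem converts a channel maximization into a contraction maximization only for a \emph{fixed} pair of states, whereas here the states still depend on the $B$ being minimized; commuting that conversion past $\min_B$ is precisely the nontrivial min-max content of Lemma~\ref{lem:linear-in-B} and Theorem~\ref{thm:StrUhlmann}, so invoking it at this point is circular. (Indeed, were the objective genuinely bilinear in $(\Xi,B)$, the exchange of $\max$ and $\min$ in Theorem~\ref{thm:StrUhlmann} would be immediate, which it is not.)

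The paper avoids the problem by starting one step earlier, from Lemma~\ref{lem:linear-in-B}: $\rfid(S,T)=\max_K\min_B\,\Re\,\vecbra{\tilde S}\Pa{K\ot B}\vecket{\tilde T}$ with $K\in\uball{\cZ_r}$. This objective \emph{is} bilinear: for fixed $K$ the inner problem is $\min_B\inner{C}{B}$, where $C$ is the Hermitian part of $\Ptr{\cZ_r}{\Pa{K\ot I_{\kprod{\cY}{1}{r}\kprod{\cX}{1}{r}}}\vecket{\tilde T}\vecbra{\tilde S}}$---this is exactly where the $\tfrac12(\cdot)+h.c.$ in the constraint on $R_r$ comes from. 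That linear problem over co-strategies is dualized just as you outline, using \cite[Corollary 7]{GutoskiW07} and strict feasibility of both primal and dual; the outer $\max_K$ then merges with the dual maximization over $(t,R_1,\dots,R_r)$ with no duality-gap worry at all, since a maximum of maxima needs no minimax theorem. If you replace your channel-based starting point by Lemma~\ref{lem:linear-in-B}, the remainder of your plan coincides with the paper's proof. Finally, your worry about powers is well founded: the contraction picture of Lemma~\ref{lem:linear-in-B} naturally produces $\rfid(S,T)$ as $\max_K\phi(K)$, while the channel picture naturally produces $\rfid(S,T)^2$, and keeping track of which one the SDP computes is exactly the kind of issue your one-round sanity check is designed to catch---note that the paper's own write-up equates $\max_K\phi(K)$ with $\rfid(S,T)^2$ while citing Lemma~\ref{lem:linear-in-B}, which as stated gives the first power, so this point deserves explicit care rather than inheritance from either source.
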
 

There are a few reasons why it is beneficial to have a semidefinite programming formulation of the strategy fidelity (or any other function for that matter). One is that efficient algorithms that approximate semidefinite programs allow for the calculation of numerical values for specific instances (assuming the problem instance is not too large for the computational platform). Another reason is that semidefinite programming has a rich duality theory, which allows one to certify bounds (upper bounds in this case) on the value of the strategy fidelity. Otherwise, such a task would be very hard using the definition alone.
  
\subsection{Applications to two-party quantum cryptography}    
    
Since the seminal work of Wiesner~\cite{Wiesner83} and Bennett and Brassard~\cite{BB84}, there has been much interest in knowing the advantages, and limitations, of quantum protocols for cryptographic tasks. Due to the interactive setting of such protocols, the use of quantum strategy analysis has proven to be useful. In~\cite{GutoskiW07}, it was shown how to rederive Kitaev's lower bound for coin-flipping~\cite{Kitaev02}. In~\cite{ChiribellaD+09b}, it was shown how to find a simple proof of the impossibility of interactive bit-commitment. Here, we find a similar proof of this and extend the argument to oblivious string transfer. 

In this paper, we present our ideas using the machinery we have developed for the strategy fidelity. 
In particular, we show that the strategy version of the Fuchs-van de Graaf inequalities (Eqn.~\eqref{FvdG}) are of central importance in providing security lower bounds. In fact, due to the nature of the strategy norm and strategy fidelity, we are able to bound the security without even specifying the entire protocol! To the best of our knowledge, this very general setting has only been studied in a few security proofs (in particular, of bit-commitment) \cite{10.1063/1.1904510, DArianoKSW07, ChiribellaD+09b} and is in stark contrast to many other security proofs, for example in~\cite{Kitaev02, SR01, Amb01, NayakS03, ABD+04, KerenidisN04, GutoskiW07, ChaillouxK09, ChaillouxK11, ChaillouxKS13-QIC, ChaillouxKS13, NST15, NST16, CGS16, Sikora17} where Alice and Bob's actions are assumed to be fully specified (and known to cheating parties). 

In this paper, we show the impossibility of ideal quantum protocols for interactive \emph{bit-commitment} and \emph{oblivious string transfer}. 

\paragraph{Interactive bit-commitment} \quad \\  

In bit-commitment, we require Alice and Bob to interact over two communication stages:
\begin{itemize}
\item Commit Phase: Alice chooses a uniformly random bit $a$ and interacts with Bob using an $r$-round pure strategy $\tilde A^a$.  
\item Reveal Phase: Alice sends $a$ to Bob and continues her interaction\footnote{Note that the interaction of the Reveal Phase is not part of the strategy $\tilde A^a$. In fact, our results do not depend on the structure of the Reveal Phase, other than revealing $a$.} with him (so that Bob can test if she has cheated). 
\item Cheat Detection: Bob, knowing which pure strategy $\tilde  B$ he has used, measures to check if the final state is consistent with Alice's pure strategy $\tilde  A^a$. 
He aborts the protocol if this measurement detects the final state is not consistent with Alice's pure strategy $\tilde  A^a$. If Alice is honest, he never aborts. 
\end{itemize} 

\noindent
Protocols are designed with the intention to achieve the following two important properties of interest:
\begin{itemize}
\item Binding: Alice cannot change her mind after the Commit Phase and reveal the other value of $a$ (without being detected by Bob). 
\item Concealing: Bob cannot learn Alice's bit $a$ before she reveals it during the Reveal Phase. 
\end{itemize} 
 
The references \cite{May97, LC97, LC97a} showed that when Alice and Bob's actions are known to both parties, bit-commitment with perfect binding and concealing is impossible. 
In the more general setting when the actions need not be fully specified beforehand, bit-commitment was shown to be impossible in \cite{10.1063/1.1904510} for the channel setting, and in \cite{DArianoKSW07, ChiribellaD+09b} for the interactive setting.  
Here, we give another proof of this fact which follows straightforwardly from the properties of the strategy fidelity and strategy norm which we have already discussed. 
  
We define the cheating probabilities of Alice and Bob as follows:
\begin{center}
\begin{tabularx}{\textwidth}{rX}
  $\PBBC$: & The maximum probability with which a dishonest Bob can cheat by \emph{learning}  an honest Alice's committed bit $a \in \{ 0, 1 \}$ after the Commit Phase. \\
$\PABC$: & The maximum probability with which a dishonest Alice can cheat by \emph{changing} her commitment from $0$ to $1$ (or from $1$ to $0$) after the Commit Phase such that Bob accepts the new value (i.e., he does not abort). 
\end{tabularx} 
\end{center} 

\begin{remark} \label{rem13}
Note that in the definition of cheating Alice above, we do not assume Alice knows Bob's actions. It could even be the case that Bob's sole purpose is to choose a co-strategy such as to minimize $\PABC$. 
\end{remark} 
 
Cheating Bob wishes to distinguish between one of two uniformly randomly chosen strategies. We know from~\cite{Gutoski12} that 
\begin{equation}
\PBBC = \half + \frac{1}{4} \snorm{A^0-A^1}{\mathrm{r}}.
\end{equation}

In Section~\ref{Crypto}, we show that 
\begin{equation}
 \PABC \geq \fid_{r} (A^0, A^1)^2.
 \end{equation} 
An interesting observation is that this only depends on Alice's honest strategies, not Bob's. 

Thus, by the Fuchs-van de Graaf inequalities for strategies (Proposition~\ref{prop:basicProp}),  
 we have the following trade-off lower bound. 

\begin{theorem}  
In any interactive quantum protocol for bit-commitment, we have that 
\begin{equation}
 \sqrt{\PABC} + 2 \PBBC \geq 2 
 \end{equation} 
implying 
\begin{equation} 
\max \{ \PABC, \PBBC \} \geq \frac{9-\sqrt{17}}{8}\approx 61 \% 
\end{equation} 
(recall the definitions of $\PABC$ and $\PBBC$ above Remark~\ref{rem13}). 
In other words, at least one of Alice or Bob can successfully cheat with probability at least $61 \%$ making bit-commitment insecure. 
\end{theorem}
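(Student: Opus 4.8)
The plan is to chain together the three facts already at hand: the exact expression $\PBBC = \half + \frac{1}{4}\snorm{A^0-A^1}{\mathrm{r}}$ for Bob's cheating probability, the bound $\PABC \geq \rfid(A^0,A^1)^2$ for Alice's cheating probability (established in Section~\ref{Crypto}), and the Fuchs--van de Graaf inequalities for strategies from Proposition~\ref{prop:basicProp}. Once these are combined, the trade-off $\sqrt{\PABC}+2\PBBC\geq 2$ will drop out, and the numerical bound on $\max\{\PABC,\PBBC\}$ will follow by elementary optimization.

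First I would rewrite the expression for $\PBBC$ to isolate the strategy norm, namely $\snorm{A^0-A^1}{\mathrm{r}} = 4\PBBC - 2$. Next, I would invoke the \emph{lower} Fuchs--van de Graaf inequality, $\rfid(A^0,A^1) \geq 1 - \half\snorm{A^0-A^1}{\mathrm{r}}$, and substitute the previous identity to obtain $\rfid(A^0,A^1) \geq 1 - \half(4\PBBC-2) = 2 - 2\PBBC$. Combining this with $\sqrt{\PABC} \geq \rfid(A^0,A^1)$ (the square root of $\PABC \geq \rfid(A^0,A^1)^2$) yields $\sqrt{\PABC} \geq 2 - 2\PBBC$, which is exactly the claimed trade-off $\sqrt{\PABC}+2\PBBC\geq 2$.

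For the second inequality, I would set $M := \max\{\PABC,\PBBC\}$ and use $\PABC \leq M$ together with $\PBBC \leq M$ in the trade-off to get $\sqrt{M} + 2M \geq 2$. Writing $s := \sqrt{M}$, this is the quadratic inequality $2s^2 + s - 2 \geq 0$, whose relevant nonnegative root is $s = \frac{\sqrt{17}-1}{4}$, forcing $\sqrt{M} \geq \frac{\sqrt{17}-1}{4}$ and hence $M \geq \Pa{\frac{\sqrt{17}-1}{4}}^2 = \frac{9-\sqrt{17}}{8}$, as stated.

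The conceptual content of this theorem lies entirely in the earlier results---in particular in establishing $\PABC \geq \rfid(A^0,A^1)^2$ and the strategy Fuchs--van de Graaf inequalities---so the remaining argument is purely algebraic and presents no genuine obstacle. If anything, the only point requiring a moment of care is the direction of the Fuchs--van de Graaf inequality one invokes: it is the \emph{lower} bound $1 - \half\snorm{S-T}{\mathrm{r}} \leq \rfid(S,T)$ that couples Bob's distinguishing advantage to Alice's ability to equivocate, and reversing it would break the argument.
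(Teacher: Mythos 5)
Your proposal is correct and follows exactly the paper's intended argument: combine $\PBBC = \frac{1}{2} + \frac{1}{4}\snorm{A^0-A^1}{\mathrm{r}}$, the bound $\PABC \geq \rfid(A^0,A^1)^2$ from Section~\ref{Crypto}, and the lower Fuchs--van de Graaf inequality of Proposition~\ref{prop:basicProp}, then solve the resulting quadratic $2s^2+s-2\geq 0$ in $s=\sqrt{\max\{\PABC,\PBBC\}}$. The paper leaves this chaining implicit ("Thus, by the Fuchs--van de Graaf inequalities for strategies\dots"), and your write-up simply makes the same algebra explicit, including the correct choice of the lower FvdG bound.
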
 

Note that this is a similar bound to the one obtained in \cite{ChiribellaD+09b} for the interactive setting and exactly the same as in~\cite{10.1063/1.1904510} in the channel setting. 

We remark that, in the scenario when Alice and Bob's actions are completely specified, optimal protocols are known~\cite{ChaillouxK11} (albeit with a slightly different definition of cheating Alice). We leave it as an open problem to determine if the bound we present above is optimal in the scenario when Bob's actions are not specified. Moreover, it would be interesting to see whether the two scenarios share the same optimal cheating probabilities.  

\paragraph{1-out-of-2 interactive oblivious string transfer} \quad \\ 

This is an interactive cryptographic task between Alice and Bob where Bob has two bit-strings\footnote{The bit-length of the strings are, surprisingly, not important for the purposes of this paper.} $(x_0, x_1)$ and Alice wishes to learn one of the two in the following manner: 
\begin{itemize}
\item Alice chooses a uniformly random bit $a$ which corresponds to her choice of which string she wishes to learn, and interacts with Bob via the $r$-round pure strategy $\tilde A^a$. 
\item For every $(x_0, x_1)$, Bob uses a pure co-strategy $\tilde B^{x_0, x_1}$, such that Alice learns the string $x_a$ with certainty by measuring her private space $\cZ_r$ at the end of the protocol. 
\end{itemize} 

Note that we do not assume any structure on how Bob behaves other than the consistency condition above. For example, $x_0$ and $x_1$ may be the result of another protocol of which Alice is not part, and thus she does not even know the distribution from which they are drawn. 
Again, Bob's strategy may be such that, conditioned on the above requirements, he just wants to foil Alice's cheating, as defined below. 

We define the cheating probabilities of Alice and Bob as follows:
\begin{center}
\begin{tabularx}{\textwidth}{rX}
  $\PBOT$: & The maximum probability with which a dishonest Bob can cheat by correctly \emph{learning} an honest Alice's choice bit $a$. \\
$\PAOT$: & The maximum probability with which a dishonest Alice can cheat by correctly \emph{learning} $x_0$ after learning $x_1$ with certainty, or vice versa. 
\end{tabularx}
\end{center}

Cheating Bob behaves the exact same as in a bit-commitment protocol. Thus his cheating probability is again
\begin{equation}
 \PBOT = \frac{1}{2} + \frac{1}{4} \snorm{A^0-A^1}{\mathrm{r}}.
 \end{equation}

In Section~\ref{Crypto}, we show the following bound on cheating Alice:
\begin{equation}
 \PAOT \geq \fid_r(A^0, A^1)^2.
 \end{equation} 

This yields the same bound as in bit-commitment, below. 

\begin{theorem} 
In any interactive quantum protocol for 1-out-of-2 oblivious string transfer, we have that 
\begin{equation}
 \sqrt{\PAOT} + 2 \PBOT    \geq 2 
 \end{equation} 
implying 
\begin{equation} 
\max \{ \PAOT, \PBOT \} \geq \frac{9-\sqrt{17}}{8}\approx 61 \% 
\end{equation}
(recall the definitions of $\PAOT$ and $\PBOT$ above). 
In other words, at least one of Alice or Bob can successfully cheat with probability at least $61 \%$, making oblivious string transfer insecure.   
\end{theorem}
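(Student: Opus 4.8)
The plan is to combine the two established cheating bounds with the Fuchs--van de Graaf inequalities, following verbatim the argument used for bit-commitment (the two ingredients have identical form). We are given Bob's identity $\PBOT = \frac{1}{2} + \frac{1}{4}\snorm{A^0-A^1}{\mathrm{r}}$ and, from Section~\ref{Crypto}, Alice's lower bound $\PAOT \geq \rfid(A^0,A^1)^2$. First I would invert Bob's identity to write $\snorm{A^0-A^1}{\mathrm{r}} = 4\PBOT - 2$, and substitute this into the lower Fuchs--van de Graaf inequality of Proposition~\ref{prop:basicProp}, namely $\rfid(A^0,A^1) \geq 1 - \frac{1}{2}\snorm{A^0-A^1}{\mathrm{r}}$, to obtain $\rfid(A^0,A^1) \geq 2 - 2\PBOT$.

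Next, since $\PAOT \geq \rfid(A^0,A^1)^2$ gives $\sqrt{\PAOT} \geq \rfid(A^0,A^1)$, I would chain the two bounds to get $\sqrt{\PAOT} \geq 2 - 2\PBOT$, which rearranges directly to the claimed trade-off $\sqrt{\PAOT} + 2\PBOT \geq 2$.

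To extract the numerical bound I would set $m := \max\{\PAOT,\PBOT\}$. Replacing both $\PAOT$ and $\PBOT$ by $m$ can only decrease the left-hand side of the trade-off (both $\sqrt{\cdot}$ and the identity map are increasing), so $\sqrt{m} + 2m \geq 2$. Writing $u := \sqrt{m} \geq 0$, this reads $2u^2 + u - 2 \geq 0$; the positive root of the associated quadratic is $u = \frac{-1+\sqrt{17}}{4}$, whence $u \geq \frac{-1+\sqrt{17}}{4}$ and therefore $m = u^2 \geq \Pa{\frac{-1+\sqrt{17}}{4}}^2 = \frac{9-\sqrt{17}}{8} \approx 61\%$, as claimed.

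Given the two stated ingredients, the argument is essentially routine algebra and presents no genuine obstacle; the only things to verify carefully are the direction of the monotone replacement in the final step and the sign of the quadratic root. The real substance lies upstream in the bound $\PAOT \geq \rfid(A^0,A^1)^2$, which rests on the operational (Uhlmann-type) interpretation of the strategy fidelity in Theorem~\ref{thm:StrUhlmann}: a strategy fidelity-achieving channel supplies Alice with an explicit procedure, acting only on her private space, that lets her recover the second string with probability at least $\rfid(A^0,A^1)^2$ after having learned the first with certainty. Since both that bound and the strategy Fuchs--van de Graaf inequalities are assumed here, the theorem follows immediately.
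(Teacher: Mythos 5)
Your proposal is correct and takes essentially the same route as the paper: combine Bob's identity $\PBOT = \frac{1}{2}+\frac{1}{4}\snorm{A^0-A^1}{\mathrm{r}}$ with the bound $\PAOT \geq \rfid(A^0,A^1)^2$ (proved in Section~\ref{Crypto} via the fidelity-achieving channel and the non-destructive measurement argument) and the lower Fuchs--van de Graaf inequality of Proposition~\ref{prop:basicProp}, then solve the resulting quadratic in $\sqrt{m}$. One wording slip: replacing $\PAOT$ and $\PBOT$ by their maximum can only \emph{increase} the left-hand side (which is exactly what preserves the inequality $\sqrt{m}+2m\geq 2$), not decrease it; your parenthetical monotonicity justification and the displayed conclusion are nonetheless correct.
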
  

Note that in the case where Bob has two \emph{bits} (i.e., the strings have bit-length $1$), an optimal security trade-off between Alice and Bob is known~\cite{CGS16}:
\begin{equation}
 \PAOT + 2 \PBOT \geq 2. 
 \end{equation}
However, this assumes perfect knowledge of Alice and Bob's honest strategies. Thus, our bound for cheating Alice is a bit weaker, but has the added benefit of only depending on her honest strategies. 
  
\subsection{Paper organization} 
We start with presenting some technical lemmas involving the strategy fidelity and generalizing Uhlmann's Theorem in Section~\ref{sec:lemma}. In Section~\ref{sect:mon}, we prove the monotonicity of the strategy fidelity under the action of supermaps. We then use the technical lemmas to formulate the strategy fidelity of two strategies as a semidefinite program in Section~\ref{sect:sdp}. We conclude the paper by presenting our main application of the study of strategy fidelity which is to capture Alice's cheating probability in interactive bit-commitment and oblivious string transfer, discussed in Section~\ref{Crypto}. 
  
\section{Technical lemmas and the strategy generalization of Uhlmann's Theorem} 

\label{sec:lemma}

In this section we prove two lemmas that allow us to establish nontrivial properties of the strategy fidelity.
These lemmas are used to prove the strategy generalization of Uhlmann's Theorem (Theorem~\ref{thm:StrUhlmann})  
and to provide a semidefinite programming formulation of the strategy fidelity (Theorem~\ref{SDP}). 

Before we proceed, let us introduce some notation. 
Let 
$\kprod{\cY}{i}{j}\kprod{\cX}{i'}{j'}$ be short for $\kprod{\cY}{i}{j}\ot\kprod{\cX}{i'}{j'}$.
Let $\lin{\cX}$, $\uni{\cX}$, $\her{\cX}$, $\pos{\cX}$, and $\den{\cX}$ be, respectively, the set of all linear, unitary, Hermitian, positive semidefinite, and density operators acting on $\cX$. 
Let $\uball{\cX}$ be the convex hull of $\uni{\cX}$, namely, the set of all operators 
$K\in\lin{\cX}$ such that $\norm{K}\leq 1$. 
Suppose $\cX$ and $\cY$ are two complex Euclidean spaces with fixed standard basis.
Given a linear operator $A:\cX\to\cY$ written in the standard basis as
\begin{equation}
A=\sum_{i=1}^{\dim(\cX)}
\sum_{j=1}^{\dim(\cY)}
 a_{j,i}\ket{j}\bra{i},
\end{equation} the \emph{vectorization} of $A$ is 
\begin{equation}
\vecket{A} :=
\sum_{i=1}^{\dim(\cX)}
\sum_{j=1}^{\dim(\cY)}
 a_{j,i}\ket{j}\ot\ket{i} \in\cY\otimes\cX 
\end{equation}
and its adjoint is $\vecbra{A}:=\Pa{\vecket{A}}^*$. 
  
\begin{lemma}[Inner product is linear in $B$]
\label{lm:linear-in-B}

Let $S,T$ be $r$-round strategies and let $\tilde S,\tilde T$ be any purifications of $S,T$.
Let $B$ be a compatible $r$-round co-strategy and let $\tilde B$ be any purification of $B$.
Let $\ket{\psi(\tilde S,\tilde B)},\ket{\psi(\tilde T,\tilde B)}$ be as in Definition~\ref{def:purestrat}
and let $K\in\lin{\cZ_r}$.
It holds that
\begin{equation}
\bra{\psi(\tilde S,\tilde B)} \Pa{K\ot I_{\cW_r}}\ket{\psi(\tilde T,\tilde B)} =
\vecbra{\tilde S} \Pa{K \ot B} \vecket{\tilde T}. 
\end{equation}

\end{lemma}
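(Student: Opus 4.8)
The plan is to reduce the identity to two elementary tools---the ricochet (transpose) property of vectorization and the fact that the co-strategy operator $B$ is the partial trace over Bob's final memory $\cW_r$ of the vectorized co-strategy isometry---glued together by one structural fact about the final state. Conceptually, the left-hand side is a fully \emph{closed} scalar: every wire (the messages $\kprod{\cY}{1}{r}$, $\kprod{\cX}{1}{r}$ and both memories $\cZ_r,\cW_r$) is contracted. The right-hand side instead leaves the message wires \emph{open} inside the Choi operator $B$. The lemma asserts that these two evaluations of the same tensor network agree, and the proof is essentially a bookkeeping of where each contraction lives.

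First I would record the tools. The vectorization identities $\Pa{M\ot I}\vecket{N}=\vecket{MN}$, $\Pa{I\ot M^{\trans}}\vecket{N}=\vecket{NM}$, and $\vecbra{A}\Pa{M\ot N}\vecket{B}=\trace\Pa{A^* M B N^{\trans}}$. Then the Choi description of the co-strategy: since $\Psi_B(Y)=\ptr{\cW_r}{\tilde B Y\tilde B^*}$, the operator $B=\jam{\Psi_B^*}$ equals $\ptr{\cW_r}{\vecket{\tilde B}\vecbra{\tilde B}}$ up to the swap of the $\kprod{\cY}{1}{r}$ and $\kprod{\cX}{1}{r}$ factors induced by passing to the adjoint map; equivalently, $\sum_{y,y'}\Psi_B(\ket{y}\bra{y'})\ot\ket{y'}\bra{y}$ is exactly $B$ read as an operator on $\kprod{\cY}{1}{r}\ot\kprod{\cX}{1}{r}$.

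Second, the structural step. I would express the final state as the loop closure of the two full isometries $\tilde A:\kprod{\cX}{1}{r}\to\kprod{\cY}{1}{r}\ot\cZ_r$ (for $\tilde A\in\{\tilde S,\tilde T\}$) and $\tilde B:\kprod{\cY}{1}{r}\to\kprod{\cX}{1}{r}\ot\cW_r$, namely $\ket{\psi(\tilde A,\tilde B)}=\sum_{y}\Pa{\bra{y}\ot I_{\cZ_r}\ot I_{\cW_r}}\Pa{\tilde A\ot I_{\cW_r}}\tilde B\ket{y}$, where $\{\ket{y}\}$ is an orthonormal basis of $\kprod{\cY}{1}{r}$ and the sum contracts Alice's output messages with Bob's input messages. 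For $r=1$ this is immediate from Definition~\ref{def:purestrat}; for general $r$ it follows by unfolding the alternating product of the $A_i$'s and $B_i$'s. I expect this to be the main obstacle---not the idea, which is standard in the strategy formalism of Ref.~\cite{GutoskiW07}, but the careful justification that closing the $\kprod{\cY}{1}{r}$ loop all at once reproduces the round-by-round, time-ordered interaction, which relies on the causal (lower-triangular) structure of the two combs. If one prefers to avoid this, the whole lemma can instead be proved by induction on $r$, absorbing the first pair $A_1,B_1$ into shorter strategies.

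Finally I would substitute the loop-closure expression for both $\ket{\psi(\tilde T,\tilde B)}$ and $\bra{\psi(\tilde S,\tilde B)}$ into the left-hand side and contract. The factor $I_{\cW_r}$ sitting between bra and ket traces out Bob's memory; because both states share the purification $\tilde B$, this turns the sandwich into $\ptr{\cW_r}{\tilde B\ket{y}\bra{y'}\tilde B^*}=\Psi_B(\ket{y}\bra{y'})$, and summing against the loop-closure basis vectors assembles these terms into the single operator $B$. Meanwhile $K$ remains on $\cZ_r$, and the surviving contractions involving $\tilde S,\tilde T$ are precisely the vectorization pairing, so by $\vecbra{A}\Pa{M\ot N}\vecket{B}=\trace\Pa{A^* M B N^{\trans}}$ the expression collapses to $\vecbra{\tilde S}\Pa{K\ot B}\vecket{\tilde T}$. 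The only delicate point is matching the $\kprod{\cY}{1}{r}\leftrightarrow\kprod{\cX}{1}{r}$ ordering and the transpose/conjugation bookkeeping introduced when passing between $\Psi_B$ and $B=\jam{\Psi_B^*}$, which is exactly compensated by the transpose $N^{\trans}$ appearing in the vectorization pairing.
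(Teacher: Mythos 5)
Your proposal is correct and takes essentially the same route as the paper: your ``loop closure'' formula is precisely the closed-form expression for $\ket{\psi(\tilde A,\tilde B)}$ from Ref.~\cite{GutoskiW07} that the paper cites rather than re-proves, and your identification of $B$ with the $\cW_r$-partial trace of $\veckb{\tilde B}$ (equivalently $\jam{\Psi_B^*}$, modulo the transpose bookkeeping you flag, which does cancel against the $N^{\trans}$ in the vectorization pairing) is the same fact the paper invokes in the form $B=\sum_j \veckb{\tilde B_j^*}$. The only difference is organizational --- you contract over message-space indices while the paper slices $\tilde S,\tilde T,\tilde B$ on final-memory indices --- which does not change the substance of the argument.
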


Note that the inner product above depends on $B$ but not on its purification $\tilde B$. This exemplifies what we stated earlier as Remark~\ref{rem:UniInvariance}.

\begin{proof}[Proof of Lemma~\ref{lm:linear-in-B}]
The proof mirrors that of Ref.\ \cite[Theorem 5]{GutoskiW07}.
The main difference is that here we compute an inner product between two \emph{distinct} vectors 
\begin{equation}
\ket{\psi(\tilde S,\tilde B)} 
\quad \text{and} \quad
(K\otimes I_{\cW_r})\ket{\psi(\tilde T,\tilde B)}
\end{equation} 
(both being \emph{normalized} if $K$ is unitary) arising from the distinct pure strategies $\tilde S,\tilde T$ for Alice, whereas the proof of Ref.\ \cite[Theorem 5]{GutoskiW07} computes a similar inner product between two \emph{identical, subnormalized} vectors.
Further clarification of that proof is given in Ref.\ \cite{Gutoski-Phd}; we draw upon both of the references \cite{GutoskiW07,Gutoski-Phd} for the present proof.

It was proved in \cite{GutoskiW07} that
\begin{align}
  \ket{\psi(\tilde S,\tilde B)} &= \Pa{ \vecbra{I_{\kprod{\cY}{1}{r}\kprod{\cX}{1}{r}}} \ot
   I_{\cZ_r\cW_r} } \Pa{\vecket{\tilde S} \ot \vecket{\tilde B}}, \\
  \ket{\psi(\tilde T,\tilde B)} &= \Pa{ \vecbra{I_{\kprod{\cY}{1}{r}\kprod{\cX}{1}{r}}} \ot
   I_{\cZ_r\cW_r} } \Pa{\vecket{\tilde T} \ot \vecket{\tilde B}},
\end{align}
from which we obtain
\begin{multline} \label{eq:lm-big}
  \bra{\psi(\tilde S,\tilde B)}\Pa{K\ot I_{\cW_r}}\ket{\psi(\tilde T,\tilde B)} \\ = \Pa{\vecbra{\tilde S} \ot \vecbra{\tilde B}}
    \Pa{ \vecket{I_{\kprod{\cY}{1}{r}\kprod{\cX}{1}{r}}}\vecbra{I_{\kprod{\cY}{1}{r}\kprod{\cX}{1}{r}}} \ot K\ot I_{\cW_r} }
    \Pa{\vecket{\tilde T} \ot \vecket{\tilde B}}. 
\end{multline}
Let
\begin{equation}
K=\sum_{i,i'=1}^{\dim(\cZ_r)} k_{i,i'}\ket{i}\bra{i'}
\end{equation}
and, for each $i=1,\dots,\dim(\cZ_r)$ and $j=1,\dots,\dim(\cW_r)$, let
\begin{align}
  \tilde S_i,\tilde T_i &: \kprod{\cX}{1}{r}\to\kprod{\cY}{1}{r} \\
  \tilde B_j &: \kprod{\cY}{1}{r}\to\kprod{\cX}{1}{r}
\end{align}
be the operators satisfying
\begin{equation}
  \tilde S = \sum_{i=1}^{\dim(\cZ_r)} \tilde S_i \ot \ket{i}, \qquad
  \tilde T = \sum_{i=1}^{\dim(\cZ_r)} \tilde T_i \ot \ket{i}, \qquad
  \tilde B = \sum_{j=1}^{\dim(\cW_r)} \tilde B_j \ot \ket{j}
\end{equation}
so that \eqref{eq:lm-big} becomes
\begin{equation} \label{eq:lm-big2}
  \sum_{i,i'=1}^{\dim(\cZ_r)} \sum_{j=1}^{\dim(\cW_r)} k_{i,i'} \Pa{ \vecbra{\tilde S_i} \ot \vecbra{\tilde B_j} } \vecket{I_{\kprod{\cY}{1}{r}\kprod{\cX}{1}{r}}} \vecbra{I_{\kprod{\cY}{1}{r}\kprod{\cX}{1}{r}}} \Pa{ \vecket{\tilde T_{i'}} \ot \vecket{\tilde B_j} }.
\end{equation}
Using an identity from Ref.\ \cite[Proposition 3.5]{Gutoski-Phd} we have that \eqref{eq:lm-big2} becomes 
\begin{equation}
  \sum_{i,i'=1}^{\dim(\cZ_r)} \sum_{j=1}^{\dim(\cW_r)} k_{i,i'} \vecbraket{\tilde S_i}{\tilde B_j^*} \cdot \vecbraket{\tilde B_j^*}{\tilde T_{i'}}
  = \sum_{i,i'=1}^{\dim(\cZ_r)} k_{i,i'} \vecbra{\tilde S_i} B \vecket{\tilde T_{i'}}
   = \vecbra{\tilde S}\Pa{K \ot B} \vecket{\tilde T}
\end{equation}
given that Bob's co-strategy equals $B=\sum_j \veckb{\tilde B_j^*}$ as observed in Ref.\ \cite[Theorem 3.1]{Gutoski-Phd}.  
\end{proof}
 
\begin{lemma} \label{lem:linear-in-B}
Let $S,T$ be $r$-round strategies and let $\tilde S,\tilde T$ be any purifications of $S,T$. It holds that
\begin{equation}
\rfid(S,T) = \max_K \min_B \; \Re \, \left( {\vecbra{\tilde S}\Pa{K\ot B}\vecket{\tilde T}} \right)  
\end{equation}
where the minimum is over all compatible $r$-round co-strategies $B$ for Bob and the maximum is over all $K \in\uball{\cZ_r}$ acting on the final memory space $\cZ_r$ for Alice. 
\end{lemma}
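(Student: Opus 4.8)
The plan is to begin from the min-max expression for the strategy fidelity already recorded in \eqref{eq:naive-min-max}, namely
\[
\rfid(S,T) = \min_B \max_U \Abs{\bra{\psi(\tilde S,\tilde B)}\Pa{U\ot I_{\cW_r}}\ket{\psi(\tilde T,\tilde B)}},
\]
where $U$ ranges over $\uni{\cZ_r}$ and $B$ over all compatible co-strategies. Applying Lemma~\ref{lm:linear-in-B} with $K=U$ turns the inner product into $\vecbra{\tilde S}\Pa{U\ot B}\vecket{\tilde T}$, so the quantity being optimized is bilinear in the pair $(U,B)$. From here the lemma follows once three things are accomplished: replace $\Abs{\cdot}$ by $\Re(\cdot)$, enlarge the domain of Alice's operator from $\uni{\cZ_r}$ to $\uball{\cZ_r}$, and exchange the order of the minimum and maximum.

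For the first two steps I would argue as follows. Fix $B$ and write $f(K) := \vecbra{\tilde S}\Pa{K\ot B}\vecket{\tilde T}$, a linear functional of $K$. Since $\uni{\cZ_r}$ is closed under multiplication by a global phase, for any unitary $U$ one may rotate away the phase of $f(U)$ to produce another unitary $U'$ with $f(U') = \Abs{f(U)} \geq 0$; hence $\max_U \Abs{f(U)} = \max_U \Re f(U)$. Next, because $f$ is linear (so in particular convex) and $\uball{\cZ_r} = \convex\pa{\uni{\cZ_r}}$ by definition, its maximum over the convex hull is attained on the generating set: $\max_{K\in\uball{\cZ_r}} \Re f(K) = \max_{U\in\uni{\cZ_r}} \Re f(U)$. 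Combining these gives
\[
\rfid(S,T) = \min_B \max_{K\in\uball{\cZ_r}} \Re\Pa{\vecbra{\tilde S}\Pa{K\ot B}\vecket{\tilde T}}.
\]

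The remaining, and main, step is to interchange the minimum and maximum. Here I would invoke a minimax theorem (Sion's, or von Neumann's in the bilinear case). The objective $\Re\Pa{\vecbra{\tilde S}\Pa{K\ot B}\vecket{\tilde T}}$ is real-valued, continuous, and affine in $K$ for each fixed $B$ and affine in $B$ for each fixed $K$; the set $\uball{\cZ_r}$ is compact and convex, and the set of compatible $r$-round co-strategies $B$ is likewise compact and convex~\cite{GutoskiW07}. The hypotheses of the minimax theorem are therefore met, and swapping the order yields exactly $\rfid(S,T) = \max_K \min_B \Re\Pa{\vecbra{\tilde S}\Pa{K\ot B}\vecket{\tilde T}}$, as claimed. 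I expect the verification of the minimax hypotheses---specifically, ensuring the optimization takes place over genuinely compact convex domains, with the convexity and compactness of the co-strategy set serving as the key external input---to be the only delicate point; the phase-absorption and convex-hull reductions are routine consequences of linearity.
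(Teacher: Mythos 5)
Your proposal is correct and takes essentially the same route as the paper's own proof: apply Lemma~\ref{lm:linear-in-B} to Eqn.~\eqref{eq:naive-min-max}, replace the absolute value by the real part, pass from $\uni{\cZ_r}$ to its convex hull $\uball{\cZ_r}$ using linearity of the objective, and then exchange the minimum and maximum via a standard minimax theorem over compact convex sets. The only differences are cosmetic---you spell out the phase-absorption step explicitly and invoke Sion/von~Neumann where the paper cites Rockafellar.
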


\begin{proof}
By applying Lemma \ref{lm:linear-in-B} to Eqn.\ \eqref{eq:naive-min-max}, we get
\begin{equation} \label{eq:minmax-1}
\rfid(S,T) 
= \min_B \max_U \Abs{ \vecbra{\tilde S} \Pa{U \ot B} \vecket{\tilde T} }
= \min_B \max_U \; \Re\Pa{ \vecbra{\tilde S} \Pa{U \ot B} \vecket{\tilde T} },
\end{equation}
where the maximum is over all $U \in\uni{\cZ_r}$.
The advantage of this identity is that the objective function is linear in $U$.
Since linear functions are also convex and since the maximum of a convex function over a compact convex set is always achieved at an extreme point, the above quantity does not change if we replace the maximization over unitaries with the maximization over the convex hull of the unitaries. Namely,
\begin{equation} \label{eq:minmax-3}
\rfid(S,T)
= \min_B \max_K \; \Re\Pa{ \vecbra{\tilde S} \Pa{K \ot B} \vecket{\tilde T} },
\end{equation}
where the maximization is over all $K\in\uball{\cZ_r}$.
By a standard min-max theorem from convex analysis (see, for example, \cite{Rockafellar70}), we may reverse the order of optimization, concluding the proof.  
\end{proof}
 
Now, with Lemmas \ref{lm:linear-in-B} and \ref{lem:linear-in-B} at our disposal, we proceed to prove the strategy generalization of Uhlmann's Theorem.
 
\begin{proof}[Proof of Theorem~\ref{thm:StrUhlmann}]
From Lemma~\ref{lem:linear-in-B}, it follows that
\begin{equation} 
\rfid(S,T) \leq \max_K \min_B \Abs{ \vecbra{\tilde S} \Pa{K \ot B} \vecket{\tilde T}}. 
\end{equation} 
We square this inequality and apply Lemma \ref{lm:linear-in-B} to obtain
\begin{equation} \label{eq:minmax-5}
\rfid(S,T)^2
\leq \max_K \min_B \, 
\bra{\psi(\tilde S,\tilde B)} \Pa{K\ot I_{\cW_r}}\kb{\psi(\tilde T,\tilde B)}\Pa{K^*\ot I_{\cW_r}} \ket{\psi(\tilde S,\tilde B)}.
\end{equation}
Let us define $\bar{K}=\sqrt{I_{\cZ_r}-K^*K}$ (noting that $K^*K\preceq I_{\cZ_r}$) and 
\begin{equation}
\Xi_K : \lin{\cZ_r}\to\lin{\cZ_r} : X \mapsto KXK^*+\bar{K}X\bar{K}^*,
\end{equation}
which is a quantum channel as its Kraus representation $\{K,\bar{K}\}$ satisfies $K^*K+\bar{K}^*\bar{K}=I_{\cZ_r}$.
Since
\begin{equation} \label{eq:minmax-7}
\bra{\psi(\tilde S,\tilde B)} \Pa{\bar{K}\ot I_{\cW_r}}\kb{\psi(\tilde T,\tilde B)}
 \Pa{\bar{K}^*\ot I_{\cW_r}} \ket{\psi(\tilde S,\tilde B)} \geq 0
\end{equation}
for all $K$ and all $\tilde B$, we have
\begin{eqnarray}  
\rfid(S,T)^2
& \leq \displaystyle\max_K \displaystyle\min_B \, 
\bra{\psi(\tilde S,\tilde B)} \Br{ \Pa{\Xi_K\otimes I_{\lin{\cW_r}}}\Pa{\kb{\psi(\tilde T,\tilde B)}} } \ket{\psi(\tilde S,\tilde B)} \\
& \leq \displaystyle\max_\Xi \displaystyle\min_B \, 
\bra{\psi(\tilde S,\tilde B)} \Br{ \Pa{\Xi\otimes I_{\lin{\cW_r}}}\Pa{\kb{\psi(\tilde T,\tilde B)}} } \ket{\psi(\tilde S,\tilde B)}. \label{eqn:IneqEq} 
\end{eqnarray}
However, we clearly have 
\begin{equation}
\rfid(S,T)^2
=
\min_B \max_\Xi \,  
\bra{\psi(\tilde S,\tilde B)} \Br{ \Pa{\Xi\otimes I_{\lin{\cW_r}}}\Pa{\kb{\psi(\tilde T,\tilde B)}} } \ket{\psi(\tilde S,\tilde B)} 
\end{equation}
due to Eqn.\ \eqref{eq:naive-min-max} and the fact that Uhlmann's Theorem also holds replacing unitaries with channels.
Hence, the inequality \eqref{eqn:IneqEq} is in fact an equality due to the max--min inequality. 
\end{proof}

\section{Monotonicity} 
\label{sect:mon}
 
Recall that strategy supermaps $\Upsilon$ map $r$-round strategies to $r'$-round strategies and they are linear and completely positive.
For our results, we need certain properties of the adjoints of strategy supermaps. 
To this end, we first prove the following lemma. 

\begin{lemma} \label{useful}
If $X \in \pos{\kprod{\cY}{1}{r}\kprod{\cX}{1}{r}}$ satisfies $\inner{X}{S} = 1$ for all
$r$-round strategies $S$ having input spaces $\cX_1,\ldots,\cX_r$ and output spaces $\cY_1,\ldots,\cY_r$, then $X$ is an $r$-round co-strategy having input spaces $\cY_1,\ldots,\cY_r$ and output spaces $\cX_1,\ldots,\cX_r$. 
\end{lemma}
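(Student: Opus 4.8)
The plan is to use the linear characterization of $r$-round strategies due to Gutoski and Watrous~\cite{GutoskiW07} and to show that the hypothesis $\inner{X}{S}=1$ for all strategies $S$ forces $X$ to satisfy the dual characterization of co-strategies. Recall that $S\in\pos{\kprod{\cY}{1}{r}\kprod{\cX}{1}{r}}$ is a strategy if and only if there are operators $S^{(0)}=1,S^{(1)},\ldots,S^{(r)}=S$ with $S^{(k)}\in\pos{\kprod{\cY}{1}{k}\kprod{\cX}{1}{k}}$ and $\ptr{\cY_k}{S^{(k)}}=S^{(k-1)}\ot I_{\cX_k}$ for each $k$; dually, $X$ is a co-strategy precisely when a ``reversed'' tower $X^{(0)},\ldots,X^{(r)}=X$ of positive semidefinite operators exists satisfying the analogous partial-trace conditions with the roles of the $\cX_k$ and $\cY_k$ spaces interchanged and with $\tr{X^{(0)}}=1$. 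The goal is to manufacture this reversed tower for $X$ directly from the hypothesis.

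First I would observe that the set $\cS$ of strategies is convex with nonempty relative interior inside the affine subspace $\cE$ cut out by the (affine-)linear conditions of the structure theorem, so that $\mathrm{aff}(\cS)=\cE$. Consequently $\inner{X}{S}=1$ for every strategy $S$ is equivalent to the pair of statements that $\inner{X}{H}=0$ for every Hermitian $H$ in the linear space $V$ underlying $\cE$ (the homogeneous version of the tower conditions, with $S^{(0)}=0$), and that $\inner{X}{S_0}=1$ for a single fixed strategy $S_0$. The first statement says exactly that $X\in V^\perp$, while the second fixes a normalization. The content of the lemma is thus to identify $V^\perp$ with the span of the co-strategy conditions.

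To carry this out I would compute $V^\perp=\mathrm{im}(\cL^*)$, where $V=\ker(\cL)$ and $\cL$ assembles the partial-trace constraints $\ptr{\cY_k}{\cdot}$; using that the adjoint of a partial trace over a space is tensoring with the identity on that space, one reads off that membership $X\in V^\perp$ is equivalent to the reversed (co-strategy) partial-trace constraints on $X$. Crucially, once these affine constraints hold the positivity of the whole tower comes for free: each lower operator $X^{(k-1)}$ is a scalar multiple of a partial trace of $X^{(k)}$ (after stripping an identity factor), and partial traces preserve positive semidefiniteness, so $X=X^{(r)}\succeq 0$ forces $X^{(r-1)},\ldots,X^{(0)}\succeq 0$ by downward induction; the normalization $\inner{X}{S_0}=1$ then yields $\tr{X^{(0)}}=1$. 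Together these exhibit the co-strategy tower and complete the proof. I expect the identification of $V^\perp$ with the co-strategy constraints—equivalently, pinning down the precise dual partial-trace conditions and matching indices between the $\cX$ and $\cY$ towers—to be the main obstacle. A clean alternative that sidesteps the explicit complement computation is to induct on $r$, peeling off the last round by testing $X$ against strategies whose final round outputs a fixed state $\sigma$ on $\cY_r$ and ignores the input $\cX_r$; this reduces $\inner{X}{S}=1$ to the $(r-1)$-round hypothesis for a suitable contraction of $\ptr{\cX_r}{X}$, and the induction hypothesis then supplies the lower part of the tower.
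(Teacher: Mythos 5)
Your primary argument is correct, but it takes a genuinely different route from the paper's. The paper's proof is essentially three lines because it invokes a heavier black box from Gutoski--Watrous: since $\inner{X}{S}=1$ implies $\inner{X}{S'}\leq 1$ for all $0\preceq S'\preceq S$, the operator $X$ lies in the polar of the downward closure of the strategy set, and the polar duality theorem of that reference says precisely that such an $X$ satisfies $X\preceq B$ for some co-strategy $B$; then $\inner{B}{S}=\inner{X}{S}=1$ for all strategies $S$, and testing against the maximally mixed strategy $S=I/\dim(\kprod{\cY}{1}{r})$ gives $\tr{X}=\tr{B}$, whence $0\preceq X\preceq B$ with equal traces forces $X=B$. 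You instead work only from the partial-trace ``tower'' characterization of strategies and co-strategies and re-derive the relevant duality by hand: the strategy set is the intersection of $\pos{\kprod{\cY}{1}{r}\kprod{\cX}{1}{r}}$ with an affine subspace $\cE$ containing a positive definite point (the same maximally mixed strategy works), so constancy of $S\mapsto\inner{X}{S}$ linearizes to $X\in V^\perp$ plus one normalization; computing $V^\perp=\sum_k\mathrm{im}(\cL_k^*)$ from the constraint maps yields exactly the co-strategy affine constraints, positivity propagates down the tower for free, and the normalization fixes the trace. This computation does go through (note the index offset you allude to: the conditions one must recover are $X=Q_r\ot I_{\cY_r}$, $\ptr{\cX_i}{Q_i}=Q_{i-1}\ot I_{\cY_{i-1}}$, $\tr{Q_1}=1$, with the identity factor on the \emph{last output} space), so your plan is sound. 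What each approach buys: the paper's is shorter but consumes a duality theorem tailor-made for this lemma; yours is more self-contained, using only the structural characterization that this paper already relies on elsewhere (for the SDP formulation), at the cost of carrying out the complement computation carefully.

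One concrete warning: the ``clean alternative'' you propose at the end --- induction on $r$, testing $X$ only against strategies whose final round outputs a fixed state $\sigma$ on $\cY_r$ and ignores $\cX_r$ --- does not work as stated. Such product strategies are far too small a family to pin down the identity tensor factor on the last round. This already fails at the first step: take $r=1$, $\dim\cY_1\geq 2$, and $X=\veckb{I}$, the Choi operator of the identity channel on $\cY_1\ot\cX_1$. Then $X\succeq 0$ and $\inner{X}{\sigma\ot I_{\cX_1}}=\tr{\sigma}=1$ for every state $\sigma$, so $X$ passes every test against constant-output strategies; yet $X$ is rank one, hence cannot carry an identity factor on $\cY_1$ and is not a co-strategy (indeed $\inner{X}{X}=\dim(\cY_1)^2\neq 1$, and $X$ is itself a valid strategy). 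The hypothesis of the lemma quantifies over \emph{all} strategies, including those whose rounds act nontrivially on their inputs, and the induction as sketched discards exactly those; so stick with your primary argument.
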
 

\begin{proof} 
If $X \succeq 0$ satisfies $\inner{X}{S} = 1$ for all strategies $S$, then $X$ also satisfies $\inner{X}{S'} \leq 1$ for all $S'$ such that $0 \preceq S' \preceq S$ for some strategy $S$. Thus, from Ref.~\cite{GutoskiW07}%
\footnote{In the terminology of \cite{GutoskiW07}, we have that $X \in (\downarrow\cS_r(\kprod{\cX}{1}{r},\kprod{\cY}{1}{r}))^\circ$.}, we have that there exists a co-strategy $B$ such that $X \preceq B$. 
For any pair of compatible strategy $S$ and co-strategy $B$, we have $\inner{B}{S}=1$~\cite{GutoskiW07}, therefore, we have 
\( \inner{X}{S} = \inner{B}{S}  \) 
for all strategies $S$. Next, if we consider
\begin{equation}
S = \frac{1}{\dim (\kprod{\cY}{1}{r})} I_{\kprod{\cY}{1}{r}\kprod{\cX}{1}{r}},
\end{equation}
where $\dim (\kprod{\cY}{1}{r})$ is the product of the dimensions of spaces $\cY_1,\ldots,\cY_r$, then this is a valid strategy. 
Then we get that $X$ and $B$ have the same trace. Since $0 \preceq X \preceq B$, we have that $X = B$, which completes the proof.
\end{proof}  
 
We can now provide an important property of the adjoint of strategy supermaps. 

\begin{lemma} \label{lem:UpsilonStar}
If $\Upsilon$ is a strategy supermap from $r$-round strategies to $r'$-round strategies, then $\Upsilon^*$ is a co-strategy supermap\footnote{Here, we define co-strategy supermaps in the analogous way as strategy supermaps.} from $r'$-round co-strategies to $r$-round co-strategies.  
\end{lemma}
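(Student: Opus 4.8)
The plan is to verify directly that $\Upsilon^*$ meets the two defining requirements of a co-strategy supermap from $r'$-round co-strategies to $r$-round co-strategies: that it is completely positive, and that it sends every $r'$-round co-strategy to an $r$-round co-strategy. The first requirement I would dispatch immediately, since the Hilbert--Schmidt adjoint of a completely positive superoperator is again completely positive; as $\Upsilon$ is completely positive by definition (with respect to Choi--Jamio{\l}kowski representations), so is $\Upsilon^*$. In particular $\Upsilon^*$ is positive, so $\Upsilon^*(B') \in \pos{\kprod{\cY}{1}{r}\kprod{\cX}{1}{r}}$ whenever $B' \succeq 0$.

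For the second requirement, the key idea is to reduce to Lemma~\ref{useful}. Fix an $r'$-round co-strategy $B'$ with input spaces $\cY'_1,\dots,\cY'_{r'}$ and output spaces $\cX'_1,\dots,\cX'_{r'}$. By Lemma~\ref{useful}, to conclude that $\Upsilon^*(B')$ is an $r$-round co-strategy it suffices to check that $\Upsilon^*(B') \succeq 0$, which holds by the previous paragraph, and that $\inner{\Upsilon^*(B')}{S} = 1$ for every $r$-round strategy $S$ having input spaces $\cX_1,\dots,\cX_r$ and output spaces $\cY_1,\dots,\cY_r$. For the latter I would pass to the adjoint and use that $\Upsilon$ preserves strategies: for any such $S$ we have $\inner{\Upsilon^*(B')}{S} = \inner{B'}{\Upsilon(S)}$, and $\Upsilon(S)$ is an $r'$-round strategy that is compatible with the $r'$-round co-strategy $B'$. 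Invoking the normalization identity $\inner{B'}{\Upsilon(S)} = 1$ for compatible strategy--co-strategy pairs~\cite{GutoskiW07} then gives $\inner{\Upsilon^*(B')}{S} = 1$ for all $S$, and Lemma~\ref{useful} finishes the argument.

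There is no real computational obstacle here; the only points requiring care are bookkeeping ones. First, one must confirm that the input and output spaces line up so that $B'$ and $\Upsilon(S)$ genuinely form a compatible co-strategy--strategy pair in the $r'$-round world, which is precisely what licenses the use of $\inner{B'}{\Upsilon(S)} = 1$. Second, one should record that complete positivity of $\Upsilon$ with respect to Choi--Jamio{\l}kowski representations transfers to $\Upsilon^*$, which is the standard fact that the Hilbert--Schmidt adjoint of a CP superoperator is CP. With these two observations the lemma follows directly from the normalization property of compatible strategies and co-strategies together with Lemma~\ref{useful}.
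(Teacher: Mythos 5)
Your proposal is correct and follows essentially the same route as the paper's own proof: pass to the adjoint via $\inner{\Upsilon^*(B')}{S} = \inner{B'}{\Upsilon(S)} = 1$ using the normalization identity for compatible strategy--co-strategy pairs from~\cite{GutoskiW07}, note that $\Upsilon^*$ inherits complete positivity from $\Upsilon$, and invoke Lemma~\ref{useful}. The only difference is presentational---you spell out the compatibility bookkeeping that the paper leaves implicit---so there is nothing to correct.
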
 

\begin{proof} 
Let $B$ be an $r'$-round co-strategy. Then have have that 
\begin{equation}
 \inner{\Upsilon^*(B)}{S} = \inner{B}{\Upsilon(S)} = 1
 \end{equation}
for all $r$-round strategies $S$. Since $\Upsilon$ is completely positive, so is $\Upsilon^*$, implying that $\Upsilon^*(B)$ is positive semidefinite. From Lemma~\ref{useful}, we have that $\Upsilon^*(B)$ is an $r$-round co-strategy, as required. 
\end{proof} 
 
\subsection{Monotonicity of the strategy fidelity}

We now provide a proof of Theorem~\ref{mon-fid}. 

\begin{proof}[Proof of Theorem~\ref{mon-fid}] 
Since $\Upsilon$ is completely positive, we can let
\begin{alignat}{3}
  &\Upsilon&&:S&&\mapsto \ptr{\cM}{MSM^*} \\
  &\Upsilon^*&&:S'&&\mapsto M^*(I_\cM\ot S')M
\end{alignat}
be Stinespring representations of $\Upsilon$ and its adjoint $\Upsilon^*$, respectively, where the operator $M$ has the form
\begin{equation}
  M:\kprod{\cY}{1}{r}\kprod{\cX}{1}{r}\to\kprod{\cY'}{1}{r'}\kprod{\cX'}{1}{r'}\cM
\end{equation}
for some appropriately large space $\cM$.
Let
\begin{equation}
  \tilde S,\tilde T : \kprod{\cX}{1}{r} \to \kprod{\cY}{1}{r}\cZ_r
\end{equation}
be purifications of the $r$-round strategies $S,T$ for some appropriately large final memory space $\cZ_r$,
and let $\vecket{\tilde S},\vecket{\tilde T}$ be their respective vectorizations.
Given the standard basis $\{\ket{i}: i \in \{ 1, \ldots, \dim (\kprod{\cX}{1}{r}) \} \}$ of 
$\kprod{\cX}{1}{r}$, we have
\begin{equation}
S = \sum_{i,j} \Ptr{\cZ_r}{\tilde S\ket{i}\bra{j}\tilde S^*}\ot \ket{i}\bra{j} 
 = \mathrm{Tr}_{\cZ_r} \bigg({\bigg( \sum_i \tilde S\ket{i}\ot\ket{i} \bigg) \bigg( \sum_j\bra{j}\tilde S^*\ot \bra{j} \bigg)\bigg)}
 = \Ptr{\cZ_r}{\veckb{\tilde S}}.
 \end{equation}
Hence
\begin{equation}
\Upsilon(S)=\Ptr{\cM}{MSM^*}
=\Ptr{\cZ_r\cM}{(M\ot I_{\cZ_r})\veckb{\tilde S}(M^*\ot I_{\cZ_r})},
\end{equation}
and an analogous equality holds for $T$ and $\tilde T$.  
Thus one can observe that the vectors
\begin{equation}
  (M\ot I_{\cZ_r})\vecket{\tilde S}, (M\ot I_{\cZ_r})\vecket{\tilde T} \in \kprod{\cY}{1}{r}\kprod{\cX}{1}{r}\cZ_r\cM
\end{equation}
are the vectorizations of purifications of the $r'$-round strategies $\Upsilon(S),\Upsilon(T)$ with final memory space $\cZ_r\cM$. 

By Eqn.~\eqref{eq:naive-min-max} and Lemma \ref{lm:linear-in-B} we have 
\begin{equation} \label{eq:mono-1}
  \fid_{\mathrm{r'}}(\Upsilon(S),\Upsilon(T)) = \min_{B'} \max_{U'} \Abs{ \vecbra{\tilde S}(M^*\ot I_{\cZ_r}) \Pa{U' \ot B'} (M\ot I_{\cZ_r})\vecket{\tilde T} }
\end{equation}
where the minimum is over all $r'$-round co-strategies $B'$ for Bob and the maximum is over all unitaries $U'\in\uni{\cZ_r\cM}$ on the final memory space $\cZ_r\cM$ for Alice.
The quantity \eqref{eq:mono-1} can only decrease if we restrict the domain of maximization to unitaries of the form $U\ot I_\cM$ for some $U\in\uni{\cZ_r}$, thus 
\begin{align}
\fid_{\mathrm{r'}}(\Upsilon(S),\Upsilon(T)) &\geq 
  {\min_{B'} \max_U} \Abs{ \vecbra{\tilde S}(M^*\ot I_{\cZ_r}) \Pa{U \ot I_\cM \ot B'} (M\ot I_{\cZ_r})\vecket{\tilde T} } \\
  &= {\min_{B'} \max_U} \Abs{ \vecbra{\tilde S}\Pa{U \ot M^*(I_\cM \ot B')M} \vecket{\tilde T} } \\
  &= {\min_{B'} \max_U} \Abs{ \vecbra{\tilde S}\Pa{U \ot \Upsilon^*(B')} \vecket{\tilde T} }. \label{eq:mono-2}
\end{align}
As the image under $\Upsilon^*$ of the set of all $r'$-round co-strategies is a subset of the set of all $r$-round co-strategies (by Lemma~\ref{lem:UpsilonStar}), the quantity \eqref{eq:mono-2} can only decrease if we extend the domain of minimization to all $r$-round co-strategies $B$ for Bob:
\begin{equation}
\fid_{\mathrm{r'}}(\Upsilon(S),\Upsilon(T)) \geq \min_B \max_U \Abs{ \vecbra{\tilde S}\Pa{U \ot B} \vecket{\tilde T} } = \rfid(S,T)
\end{equation}
as desired. 
\end{proof}

\subsection{Monotonicity of the strategy norm}
  
\begin{proof}[Proof of Theorem~\ref{mon-norm}] 
By the definition of the strategy norm, Definition~\ref{def:str_norm}, we have 
\begin{align}
  \Snorm{\Upsilon(H)}{\mathrm{r'}}
  &= \max\Set{\Inner{B'_0-B'_1}{\Upsilon(H)} : \textnormal{$B'_0 + B'_1$ is an $r'$-round co-strategy, $B'_0, B'_1 \succeq 0$}} \\
  &= \max\Set{\Inner{\Upsilon^*(B'_0)-\Upsilon^*(B'_1)}{H} : \textnormal{$B'_0 + B'_1$ is an $r'$-round co-strategy, $B'_0, B'_1 \succeq 0$}} \\ 
  &\leq \max\Set{\Inner{B_0-B_1}{H} : \textnormal{$B_0 + B_1$ is an $r$-round co-strategy, $B_0, B_1 \succeq 0$}} \\
  &= \Snorm{H}{\mathrm{r}}.   
\end{align}  
Note that $\Upsilon^*$ is both linear and completely positive.
Thus, given $B'_0,B'_1\succeq0$ such that $B'_0+B'_1$ is an $r'$-round co-strategy, we have that $B_0:=\Upsilon^*(B'_0)\succeq0$ and $B_1:=\Upsilon^*(B'_1)\succeq0$ and,  by Lemma~\ref{lem:UpsilonStar}, $B_0+B_1$ is an $r$-round co-strategy.
But the image under $\Upsilon^*$ of the set of all $r'$-round co-strategies may be a strict subset of the set of all $r$-round co-strategies, hence the inequality in the above expression.  
\end{proof}   
 
\section{Semidefinite programming formulation for strategy fidelity} 
\label{sect:sdp} 
   
In this section, we use Lemma~\ref{lem:linear-in-B} 
to prove Theorem~\ref{SDP}. From Lemma~\ref{lem:linear-in-B}, we have that  
\begin{equation}
 \rfid(S,T)^2 = {\max} \; \{ \phi(K) : K\in\uball{\cZ_r} \}
 \end{equation} 
where $\phi(K) := {\displaystyle\min_B} \; \Re \, \vecbra{\tilde S} \Pa{K \ot B} \vecket{\tilde T}$, and $B$ is Bob's co-strategy.  
By defining 
\begin{equation}
 C := \frac{1}{2} \Ptr{\cZ_r}{\Pa{K \otimes I_{\kprod{\cY}{1}{r}\kprod{\cX}{1}{r}}} \vecket{\tilde T} \vecbra{\tilde S}} + \frac{1}{2} \Br{\Ptr{\cZ_r}{\Pa{K \otimes  I_{\kprod{\cY}{1}{r}\kprod{\cX}{1}{r}}} \vecket{\tilde T} \vecbra{\tilde S}}}^*
 \end{equation} 
we can write 
\begin{equation}
 \phi(K) = {\min_B} \, \inner{C}{B}.
 \end{equation} 
From \cite[Corollary 7]{GutoskiW07}, we know that $B$ must satisfy $B = Q_{r} \otimes I_{\cY_{r}}$ for some $(Q_1, \ldots, Q_r)$ satisfying 
\begin{equation}
\tr{Q_{1}} = 1, 
\quad 
\ptr{\cX_i}{Q_{i}} = Q_{i-1} \otimes I_{\cY_{i-1}}, \textup{ for } i \in \{ 2, \ldots, r \} 
\end{equation}
and $Q_1 \in \pos{\cX_1}$, $Q_i \in \pos{\kprod{\cY}{1}{i-1}\ot\kprod{\cX}{1}{i}}$, for $i \in \{ 2, \ldots, r \}$. 
Thus, $\phi(K)$ can be formulated as a semidefinite program. Its dual can be written as 
\begin{multline}
 \alpha(K) := \max 
\Big\{ \; t \; : \; t I_{\cX_1} \preceq \ptr{\cY_1}{R_1}, \\ 
R_j \otimes I_{\cX_{j+1}} \preceq \ptr{\cY_{j+1}}{R_{j+1}} \textup{ for } j \in \{ 1, \ldots, r -1 \}, \; 
R_{r} \preceq C \; \Big\},
\end{multline} 
where $R_j \in \her{\kprod{\cY}{1}{j}\ot\kprod{\cX}{1}{j}}$. 
Since this has a strictly feasible solution, as does the primal, we know $\alpha(K) = \phi(K)$ by strong duality and $\alpha(K)$ attains an optimal solution.
We now let $M = \left[ \begin{array}{cc} 
I_{\cZ_r} & K \\ 
K^* & I_{\cZ_r} 
\end{array} \right]$ and set $M \succeq 0$ to get $\| K \| \leq 1$. We can check that $C$ is a linear function in $M$ (since $M$ is Hermitian). Thus, we have that the strategy fidelity can be written as in Theorem~\ref{SDP}. 
 
\section{Alice's cheating in interactive bit-commitment and oblivious string transfer} 
\label{Crypto}

In this section we show that Alice can cheat with probability $\fid_r(A^0, A^1)^2$ in either bit-commitment or oblivious string transfer. 
The cheating has the same flavour in both cases: Alice will follow the protocol honestly, then try to change her state as to make it look like she chose the other strategy from the beginning. Suppose Alice uses pure strategy $\tilde A^a$ and Bob uses pure co-strategy $\tilde B$. For brevity, define for each $a \in \{ 0, 1 \}$ the following states 
\begin{equation} \label{cheating}
\ket{\psi_a} := \ket{\psi(\tilde A^a,\tilde B)}
\quad \textup{ and } \quad 
\sigma_a := (\Xi^a \ot I_{\cW_r})(\kb{\psi_a})   
\end{equation} 
where $\Xi_a$ is the strategy fidelity-achieving channel (from Definition~\ref{def:fidchannel}) such that 
\begin{equation} 
\label{fidelitybound}
\bra{\psi_{\bar a}} 
\sigma_a 
\ket{\psi_{\bar a}} 
\geq 
\rfid(A^0, A^1)^2.  
\end{equation} 
Note that the aim of $\Xi^a$ is to get $\sigma_a$ as  close as possible to $\kb{\psi_{\bar{a}}}$. 

\subsection{Bit-commitment} 

When we study interactive bit-commitment, we are applying the strategy/co-strategy formalism \emph{only} to the Commit Phase. From the above discussion, Alice can create the state 
\begin{equation}
 \sigma_a \in \den{\cZ_r \otimes \cW_r}
 \end{equation} 
to try to change her commitment from $a$ to $\bar{a}$. 
Then Alice continues her actions to ``reveal'' $\bar a$ in the Reveal Phase, as does Bob (even though Bob's actions are not specified to Alice). We just assume that this entire process is done by a unitary $U_{\bar a}$ acting on $\cZ_r \otimes \cW_r$. Then, Bob has a projective measurement $\{ \Pi_{accept}, \Pi_{reject} \}$ which accepts $U_{\bar a} \ket{\psi_{\bar a}}$ with certainty, thus leading to a \emph{non-destructive measurement}. Thus, we have 
\begin{equation}
 (I_{\cZ_r} \otimes \Pi_{accept}) U_{\bar a} \ket{\psi_{\bar a}} = U_{\bar a} \ket{\psi_{\bar a}}. \end{equation} 
This implies that 
\begin{equation}
 (I_{\cZ_r} \otimes \Pi_{accept}) \succeq U_{\bar a} \kb{\psi_{\bar a}} U_{\bar a}^*.
 \end{equation} 
However, Alice's actions have led to them sharing $U_{\bar a} \sigma_a U_{\bar a}^*$ at the end of the protocol. So, we have that Alice successfully reveals $\bar a$ with probability 
\begin{equation}
\PABC \geq 
\inner{I_{\cZ_r} \otimes \Pi_{accept}}{U_{\bar a} \sigma_a U_{\bar a}^*} 
\geq 
\Inner{U_{\bar a} \kb{\psi_{\bar a}} U_{\bar a}^*}{U_{\bar a} \sigma_a U_{\bar a}^*} 
= 
\Inner{\kb{\psi_{\bar a}}}{\sigma_a} 
\geq 
\fid_r(A^0, A^1)^2  
 \end{equation} 
using Eqn.~\eqref{fidelitybound}, as desired. 
  
\subsection{Oblivious string transfer} 

We can assume Alice uses a projective measurement $\{ \Pi^a_{z} \}$ to learn her desired string. Note that since $x_a$ is learned with certainty, this is a \emph{non-destructive measurement}, as in the bit-commitment analysis above. That is, we have 
\begin{equation}
\Pa{\Pi^a_{x_a} \ot I_{\cW_r}} \ket{\psi(\tilde A^a, \tilde B^{x_0, x_1})} = \ket{\psi(\tilde A^a, \tilde B^{x_0, x_1})}  
\end{equation}
for all $a$ and $(x_0, x_1)$. Again, this implies 
\begin{equation} \label{neato}
\Pi^a_{x_a} \ot I_{\cW_r} \succeq \kb{\psi(\tilde A^a, \tilde B^{x_0, x_1})}. 
\end{equation} 

Thus, after learning $x_a$, she can create the state $\sigma_a$ (defined above) to try to learn $x_{\bar a}$. 
(Here, the $\tilde B$ in the definition of $\sigma_a$ is $\tilde B^{x_0, x_1}$.) Then she measures as if she had used pure strategy $\tilde A^{\bar{a}}$ (that is, using $\{ \Pi^{\bar{a}}_z \}$) to try to learn $x_{\bar a}$. Then, using \eqref{neato} and the definitions in \eqref{cheating}, we have
\begin{equation} 
\PAOT \geq 
\inner{\Pi^{\bar a}_{x_{\bar a}} \ot I_{\cW_r}}{\sigma_a}  
\geq 
\bra{\psi_{\bar a}} {\sigma_a} \ket{\psi_{\bar a}}  
\geq 
\rfid(A^0,A^1)^2,  
\end{equation} 
as desired. 
  
\section*{Acknowledgements} 

Research at the Perimeter Institute is supported by the Government of Canada through Industry Canada and
by the Province of Ontario through the Ministry of Research and Innovation. 
GG also acknowledges support from CryptoWorks21.
JS acknowledges support from NSERC Canada. 
Research at the Centre for Quantum Technologies at the National University of Singapore is
partially funded by the Singapore Ministry of Education and the National Research Foundation,
also through the Tier 3 Grant ``Random numbers from quantum processes,'' (MOE2012-T3-1-009).
This material is based on research supported in part by the Singapore National Research Foundation under NRF RF Award
No. NRF-NRFF2013-13.

\newcommand{\etalchar}[1]{$^{#1}$}

\end{document}